\newcommand{\field}[1]{\mathbb{#1}}
\newcommand{\N}{\field{N}}
\newcommand{\R}{\field{R}}
\newcommand{\C}{\field{C}}
\newcommand{\EE}{\mathcal E}
\newcommand{\eps}{\varepsilon}
\newcommand{\ph}{\varphi}
\newcommand{\const}{\mathrm{const}}
\newcommand{\norm}[1]{\mbox{$\left\| #1 \right\|$}}           
\newcommand{\sprod}[2]{\mbox{$\left\langle #1,#2 \right\rangle$}}        
\newcommand{\form}[3]{\mbox{$\left\langle #1\left|#2 \right|#3 \right\rangle $}}   
\newcommand{\supp}{\operatorname{supp}}
\newcommand{\curl}{\operatorname{curl}}
\newcommand{\dist}{\operatorname{dist}}
\newtheorem{theorem}{Theorem}[section]
\newtheorem{lemma}[theorem]{Lemma}
\newtheorem{corollary}[theorem]{Corollary}
\theoremstyle{plain}
\title{On the Magnetic Pekar Functional and the Existence of Bipolarons}
\author{
M.~Griesemer, F.~Hantsch and D.~Wellig\\
Universit\"at Stuttgart, Fachbereich Mathematik\\
70550 Stuttgart, Germany}
\date{}
\begin{document}
\maketitle
\begin{abstract}
First, this paper proves the existence of a minimizer for the
Pekar functional including a constant magnetic field and 
possibly some additional local fields that are energy reducing.
Second, the existence of the aforementioned minimizer is used to establish the binding of polarons in the model of 
Pekar-Tomasevich including external fields.
\end{abstract}

\section{Introduction}
The Pekar functional including external electric and magnetic potentials is given by
\begin{equation}\label{pekar}
  \int \Big(|D_A\ph|^2+V|\ph|^2\Big)\, dx - \int\frac{|\ph(x)|^2|\ph(y)|^2}{|x-y|}dxdy
\end{equation}
where $D_A:= -i\nabla+A$ and $\ph\in H^1_A(\R^3)$. The letters $V$ and $A$ denote (real-valued) scalar and vector potentials
associated with the external electric and magnetic fields $-\nabla V$ and $\curl A$. Since $\ph$ denotes
the wave function of a quantum particle (electron) we impose the constraint that
\begin{equation}\label{norm}
     \int|\ph|^2\, dx = 1.
\end{equation}
The functional \eqref{pekar} arises e.g. in the study of the ground
state energy of the polaron \cite{DV1983, LT1997} and in the analysis of a self-gravitating quantum particle \cite{MPT1997}.
Depending on the context, the Euler-Lagrange equation associated with \eqref{pekar}, \eqref{norm} is called Choquard equation or 
Schr\"odinger-Newton equation. The time-dependent version of the Euler-Lagrange equation describes the
dynamics of interacting many-boson systems in the mean field limit
\cite{EY2001}. We are interested in the question whether the
functional \eqref{pekar} subject to \eqref{norm}  has a minimizer,
and we shall give a positive answer for a class of potentials
including all previously considered cases. Second, we shall use the
existence of a minimizer to prove binding of polarons in the model
of Pekar and Tomasevich with an external magnetic field.

In the case $A=0$ and $V=0$ it is a well-known result, due to Lieb
\cite{Lieb77}, that the Pekar functional \eqref{pekar}, \eqref{norm} possesses a
unique, rotationally symmetric minimizer, which moreover can be
chosen pointwise positive. For the existence part a second proof
has been given by Lions as an application of his 
concentration compactness principle \cite{Lions1984}. Lions also considered the
case of non-vanishing $V\leq 0$. In this paper we establish
existence of a minimizer for constant magnetic fields and vanishing $V$, as well
as for certain local perturbations of this field configuration. For example, if $\curl A$ is constant, $V(x)=-|x|^{-1}$, then \eqref{pekar} has a minimizer as well.
More generally, the Pekar functional has a minimizer for any local perturbation
of  the fields $A(x)=(B\wedge x)/2$, $V=0$ that leads to a reduction of the energy. 
We give examples of non-linear vector potentials for which this trapping assumption is satisfied.

In the second part of the paper we address the question of binding of two polarons subject to given electromagnetic 
fields $A,V$ in the model of Pekar and Tomasevich. For  $A=0,V=0$ this question has been studied by Miyao, Spohn and by Lewin and answered 
in the affirmative for admissible values of the electron-electron repulsion close to the critical one  \cite{MS2007, Lewin2011}. 
In fact, Lewin proved the binding of any given number 
of polarons by establishing a Van der Waals type interaction between two polaron clusters. This method makes use of a spherical invariance 
which is broken by the presence of a magnetic field. We here describe a much softer argument to explain the binding of two polarons that works for 
any given $A,V$ and requires nothing but the existence of a minimizer  for \eqref{pekar}, \eqref{norm}. This argument is based on the observation that the product 
$\psi\otimes \psi$ of two copies of a minimizer $\psi$ of \eqref{pekar}, \eqref{norm} does not solve the Euler-Lagrange
equation of the Pekar-Tomasevich functional and hence cannot be a minimizer of this functional. This argument does not depend 
on the presence of external fields and seems to be novel. It can be extended to
multipolaron systems, and this will be done in subsequent work.

In a companion paper we derive estimates on the ground state energy of the Fr\"ohlich 
polaron subject to electromagnetic fields $A,V$ in the limit of strong electron-phonon coupling, $\alpha\to\infty$.
For fields  $A,V$ that are suitably rescaled with $\alpha$, it turns out that this ground state energy is correctly given by $\alpha^2$ times the minimum of 
\eqref{pekar}, \eqref{norm} up to errors of smaller order. In view of the results of the present paper the binding of Fr\"ohlich polarons subject to 
strong external fields and large $\alpha$ will follow. In the case $A=0$, $V=0$ a similar result 
has previously been established by Miyao and Spohn on the bases of \cite{DV1983, LT1997, Lieb77}. In the physical literature the 
existence of Fr\"ohlich bipolarons in the presence of magnetic fields is studied e.g. in \cite{BroDev1996}.

Solutions to the Choquard equation with magnetic field have very recently been studied in 
\cite{CSS2010, CSS2011}. In \cite{CSS2011} infinitely many solutions are found whose symmetry corresponds to the 
symmetry of $A$. Constant magnetic fields seem to be excluded, however. The constrained minimization problem \eqref{pekar}, \eqref{norm}
with non-vanishing magnetic field does not seem to have been studied yet. 
Nevertheless, as our methods are not new, we would not be surprised if some of our results on the existence of a minimizer 
for \eqref{pekar},\eqref{norm} with $A\neq 0$ could be inferred from existing results in the literature.  

Section 2 is devoted to the problem of existence of minimizers for \eqref{pekar}, \eqref{norm}; in Section~3 the binding 
of polarons is established. There is an appendix where technical auxiliaries are collected.\\ 

\emph{Acknowledgments.} Fabian Hantsch is supported by the \emph{Studienstiftung des Deutschen Volkes}, David Wellig has been supported by 
a stipend of the \emph{Landesgraduiertenf\"orderung} of Baden-W\"urttemberg.

\section{The Magnetic Pekar Functional}
This section contains all our results on the
existence of a minimizer for the Pekar functional, as well as the main parts of the proofs.
Some technical auxiliaries have been deferred to the appendix.

The \emph{minimal assumptions} that we shall make throughout the
paper, are that $A,V$ are real-valued with $A_k,V\in
L^2_{loc}(\R^3)$ and that $V$ is infinitesimally small with
respect to $-\Delta$, $V\ll -\Delta$. This means that for every
$\eps >0$ there exists $C_{\eps}\in\R$ such that
$$
\|V\ph\|\le\eps\|\Delta\ph\| + C_{\eps}\|\ph\|
$$
for all $\ph\in C_0^{\infty}(\R^3)$. Here and henceforth $\|\cdot\|$ denotes an $L^2$-norm.
Every potential $V$ that
admits a decomposition $V=V_1+V_2$ with $V_1\in L^2(\R^3)$ and
$V_2\in L^{\infty}(\R^3)$ is infinitesimally small w.r.t.
$-\Delta$.

We define $D_A := -i\nabla + A$ and
$$
   H^1_A(\R^3)=\big\{\ph\in L^2(\R^3)\mid D_A\ph \in L^2(\R^3;\C^3)\big\}.
$$
Equipped with the norm $\|\ph\|_A^2:=\|D_A\ph\|^2+\|\ph\|^2$ this
space is complete and $C_0^{\infty}(\R^3)$ is dense. This means
that the quadratic form $\sprod{D_A\ph}{D_A\ph}$ is closed on
$H^1_A(\R^3)$ and that $C_0^{\infty}(\R^3)$ is a core. The unique
self-adjoint operator associated with this form is denoted $D_A^2$.

We define the Pekar functional $\EE^{A,V}(\ph)$ by the expression
\eqref{pekar}. For the domain of this functional we take
$\left\{\ph\in H^1_A(\R^3)|\int |\ph|^2dx =1\right\}$ unless
explicitly stated otherwise. In particular, by a \emph{minimizer}
of $\EE^{A,V}$ we mean a vector $\ph$ from this domain. It is not
hard to see, using the Hardy and the diamagnetic inequalities,
that $\EE^{A,V}$ is bounded below and that every minimizing
sequence is bounded in $H^1_A(\R^3)$, see Lemma~\ref{D-cont}. We
set
\begin{equation}
    C^{A,V}(\lambda):=\inf\big\{\EE^{A,V}(\ph)\big| \ph\in H^1_A(\R^3),\ \|\ph\|^2=\lambda\big\} \label{minimum}
\end{equation}
where $\lambda>0$. As a preparation for the proofs of the theorems
of this section we first establish a few general properties of the
Pekar functional \eqref{pekar} and its lower bounds
\eqref{minimum}. To this end, and for use throughout the paper, we introduce the following notation:
\begin{equation*}
    V_{\ph}(x) := \int\frac{|\ph(y)|^2}{|x-y|}\,dy,\qquad
    D(\rho) :=  \int\frac{\rho(x)\rho(y)}{|x-y|}\,dxdy,
\end{equation*}
where usually $\rho=\rho_\ph:=|\ph|^2$.


\begin{lemma}\label{general-AV}
Under the above minimal assumptions on $V,A$, the following is true:
\begin{itemize}
\item[(i)] If $\EE^{A,V}(\ph_n)\to C^{A,V}(1)$ and $\ph_n\to\ph$ as $n\to\infty$, then  $\EE^{A,V}(\ph)=C^{A,V}(1)$ and  $\ph_n\to\ph$ in $H^1_A(\R^3)$.
\item[(ii)] If $\EE^{A,V}(\ph)=C^{A,V}(1)$, then $\ph$ is an eigenvector of $D_A^2+V- 2V_{\ph}$ associated with the lowest eigenvalue
of this operator, which is $C^{A,V}(1)-D(\rho_{\ph})$.
\item[(iii)] The map $\lambda\mapsto C^{A,V}(\lambda)$ is continuous.
\item[(iv)] If $\liminf_{n\to\infty} D(\rho_{\ph_n})>0$ for every (normalized) minimizing sequence of $\EE^{A,V}$, then for all $\lambda\in (0,1)$,
$$C^{A,V}(1) <  C^{A,V}(\lambda) +  C^{A,V} (1-\lambda).$$
\end{itemize}
\end{lemma}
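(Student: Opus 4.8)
My plan is to prove the four parts essentially in the order stated, since (iv) will rely on (iii) and the machinery behind (i)–(ii). Let me think about each.

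For (i), I would exploit weak lower semicontinuity. A minimizing sequence $\ph_n$ is bounded in $H^1_A$ by the remark before the lemma (Lemma~\ref{D-cont}), so after passing to a subsequence $\ph_n\rightharpoonup\ph$ weakly in $H^1_A$; the stated hypothesis "$\ph_n\to\ph$" I read as strong $L^2$-convergence. The kinetic term $\ph\mapsto\|D_A\ph\|^2$ is weakly lower semicontinuous, and the potential term $\int V|\ph|^2$ converges because $V\ll-\Delta$ makes $\ph\mapsto\int V|\ph|^2$ continuous on the relevant space (again Lemma~\ref{D-cont} should give this). For the Coulomb self-energy $D(\rho_\ph)$, strong $L^2$-convergence of $\ph_n$ together with the $H^1_A$-bound and the diamagnetic inequality gives strong convergence of $|\ph_n|$ in, say, $L^{12/5}$ or by interpolation in $L^p$ for $p\in[2,6)$, which is enough to pass to the limit in $D(\rho_{\ph_n})\to D(\rho_\ph)$ using the Hardy--Littlewood--Sobolev inequality. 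Combining, $\EE^{A,V}(\ph)\le\liminf\EE^{A,V}(\ph_n)=C^{A,V}(1)$, and since $\|\ph\|=1$ we get equality $\EE^{A,V}(\ph)=C^{A,V}(1)$. But then every inequality above must be an equality, in particular $\|D_A\ph\|^2=\lim\|D_A\ph_n\|^2$, which upgrades weak to strong convergence in $H^1_A$.

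For (ii), this is a standard Lagrange-multiplier / first-variation argument. If $\ph$ minimizes $\EE^{A,V}$ under $\|\ph\|^2=1$, the first variation gives the Euler--Lagrange equation $(D_A^2+V-2V_\ph)\ph=\mu\ph$ for some multiplier $\mu$. Pairing with $\ph$ and using $\int V_\ph|\ph|^2=D(\rho_\ph)$ gives $\mu=\EE^{A,V}(\ph)+2D(\rho_\ph)-2D(\rho_\ph)\cdot\!$—more carefully, $\langle\ph,(D_A^2+V)\ph\rangle-2\langle\ph,V_\ph\ph\rangle=\big(C^{A,V}(1)+D(\rho_\ph)\big)-2D(\rho_\ph)=C^{A,V}(1)-D(\rho_\ph)$, so $\mu=C^{A,V}(1)-D(\rho_\ph)$. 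To see it is the \emph{lowest} eigenvalue, suppose there were $\chi$ with $\langle\chi,(D_A^2+V-2V_\ph)\chi\rangle<\mu\|\chi\|^2$; then testing $\EE^{A,V}$ along the curve $\ph_t=(\ph+t\chi)/\|\ph+t\chi\|$ and expanding to second order, the quadratic form that controls the second variation is (twice) $\langle\,\cdot\,,(D_A^2+V-2V_\ph-\mu)\,\cdot\,\rangle$ restricted suitably, and a negative direction would contradict minimality. One must be slightly careful that the nonlinear term $D(\rho_{\ph_t})$ contributes a positive-definite piece to the second variation (it does, since $D(\cdot)$ is convex and its Hessian is $\ge0$), so it only helps the minimality — the potential danger is the cross term, which vanishes at $t=0$ by the Euler--Lagrange equation. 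This is the part where I expect to spend the most care.

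For (iii), continuity of $\lambda\mapsto C^{A,V}(\lambda)$: the natural scaling is $\ph\mapsto\ph_\mu:=\sqrt{\mu}\,\ph(\cdot)$ which sends $\|\ph\|^2=\lambda$ to $\|\ph_\mu\|^2=\mu\lambda$, with $\int|D_A\ph_\mu|^2=\mu\int|D_A\ph|^2$ (same $A$, no spatial rescaling), $\int V|\ph_\mu|^2=\mu\int V|\ph|^2$, and $D(\rho_{\ph_\mu})=\mu^2 D(\rho_\ph)$. Hence if $\ph$ is near-optimal for $C^{A,V}(\lambda)$, then $\sqrt{\mu/\lambda}\,\ph$ is a competitor for $C^{A,V}(\mu)$ giving $C^{A,V}(\mu)\le(\mu/\lambda)\big(\text{kinetic}+\text{potential}\big)-(\mu/\lambda)^2 D(\rho_\ph)$, and letting the test function be near-optimal yields $\limsup_{\mu\to\lambda}C^{A,V}(\mu)\le C^{A,V}(\lambda)$ and, symmetrically, the reverse, using boundedness of minimizing sequences in $H^1_A$ to control the error terms uniformly. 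So $C^{A,V}$ is continuous (indeed locally Lipschitz on compacts).

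For (iv), the strict binding inequality: fix $\lambda\in(0,1)$ and a normalized minimizing sequence $\ph_n$ with $D(\rho_{\ph_n})\ge c>0$ eventually. Rescale: put $u_n:=\sqrt{\lambda}\,\ph_n$ (mass $\lambda$) and $v_n:=\sqrt{1-\lambda}\,\ph_n$ (mass $1-\lambda$). Then, using the scaling identities from (iii),
\begin{align*}
\EE^{A,V}(u_n)+\EE^{A,V}(v_n)
&=\big(\lambda+(1-\lambda)\big)\Big(\|D_A\ph_n\|^2+\textstyle\int V|\ph_n|^2\Big)
-\big(\lambda^2+(1-\lambda)^2\big)D(\rho_{\ph_n})\\
&=\EE^{A,V}(\ph_n)+\big(1-\lambda^2-(1-\lambda)^2\big)D(\rho_{\ph_n})\\
&=\EE^{A,V}(\ph_n)+2\lambda(1-\lambda)\,D(\rho_{\ph_n}).
\end{align*}
Since $C^{A,V}(\lambda)\le\EE^{A,V}(u_n)$ and $C^{A,V}(1-\lambda)\le\EE^{A,V}(v_n)$, we obtain
$$
C^{A,V}(\lambda)+C^{A,V}(1-\lambda)\ \le\ \EE^{A,V}(\ph_n)+2\lambda(1-\lambda)\,D(\rho_{\ph_n}).
$$
Letting $n\to\infty$ and using $\EE^{A,V}(\ph_n)\to C^{A,V}(1)$ together with $\liminf D(\rho_{\ph_n})\ge c>0$ gives
$$
C^{A,V}(\lambda)+C^{A,V}(1-\lambda)\ \ge\ C^{A,V}(1)+2\lambda(1-\lambda)c\ >\ C^{A,V}(1),
$$
which is the claim. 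The only subtlety is that one must read the hypothesis of (iv) as applying to \emph{some} (hence, after the computation, giving the bound from \emph{any}) minimizing sequence; since the hypothesis is stated for every minimizing sequence, we are free to use whichever one is convenient, and the argument above closes. The main obstacle overall is the second-variation argument in (ii); the rest is bookkeeping with the two inequalities (HLS and diamagnetic) and the elementary mass-rescaling.
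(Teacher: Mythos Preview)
Your argument for part~(iv) has a genuine error: the inequality you derive goes the wrong way. From $C^{A,V}(\lambda)\le\EE^{A,V}(u_n)$ and $C^{A,V}(1-\lambda)\le\EE^{A,V}(v_n)$ you correctly obtain
\[
C^{A,V}(\lambda)+C^{A,V}(1-\lambda)\ \le\ \EE^{A,V}(\ph_n)+2\lambda(1-\lambda)\,D(\rho_{\ph_n}),
\]
but passing to the limit yields only the \emph{upper} bound $C^{A,V}(\lambda)+C^{A,V}(1-\lambda)\le C^{A,V}(1)+2\lambda(1-\lambda)\liminf D(\rho_{\ph_n})$, not the lower bound you claim in the next line. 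The sign flip from $\le$ to $\ge$ is unjustified, and in fact the whole strategy cannot work: building competitors for mass~$\lambda$ and $1-\lambda$ out of a mass-$1$ minimizing sequence can only give upper bounds on $C^{A,V}(\lambda)+C^{A,V}(1-\lambda)$, whereas the statement asks for a strict lower bound. The paper instead proves the stronger inequality $C^{A,V}(\lambda)>\lambda\,C^{A,V}(1)$ for $\lambda\in(0,1)$. From the scaling identity $C^{A,V}(\lambda)=\lambda\inf_{\|\ph\|=1}\big\{\|D_A\ph\|^2+\langle\ph,V\ph\rangle-\lambda D(\rho_\ph)\big\}$ one has $C^{A,V}(\lambda)\ge\lambda C^{A,V}(1)$ since $D(\rho_\ph)\ge 0$; equality would force the existence of a normalized sequence $\ph_n$ with $\|D_A\ph_n\|^2+\langle\ph_n,V\ph_n\rangle-\lambda D(\rho_{\ph_n})\to C^{A,V}(1)$, which is then automatically minimizing for $\EE^{A,V}$ with $(1-\lambda)D(\rho_{\ph_n})\to 0$, contradicting the hypothesis.

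Your approaches to (i), (ii) and (iii) are correct. For (ii), note that the paper avoids the second-variation computation altogether: from $0\le D(\rho_\ph-\rho_\psi)$ one gets the linearization inequality $\EE^{A,V}(\psi)\le\langle\psi,(D_A^2+V-2V_\ph)\psi\rangle+D(\rho_\ph)$ for every normalized $\psi$, with equality at $\psi=\ph$; this immediately shows that $\ph$ minimizes the quadratic form on the right, i.e.\ is a ground state of $D_A^2+V-2V_\ph$ with ground-state energy $C^{A,V}(1)-D(\rho_\ph)$. This is considerably shorter than the Euler--Lagrange plus second-variation route and sidesteps the cross terms you were worried about. For (iii) the paper likewise packages your scaling observation more cleanly by noting that $\lambda\mapsto C^{A,V}(\lambda)/\lambda$ is an infimum of affine functions of $\lambda$, hence concave, hence continuous.
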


\begin{proof}
(i) Since $(\ph_n)$ is bounded in $H^1_A(\R^3)$ and $\ph_n\to\ph$
we see that $\ph_n\rightharpoonup\ph$ in $H^1_A(\R^3)$, and hence
that $\EE^{A,V}(\ph)\leq \liminf_{n\to\infty}\EE^{A,V}(\ph_n)$, by
Lemma~\ref{D-cont}, (ii). It follows that
$\EE^{A,V}(\ph)=C^{A,V}(1)=\lim_{n\to\infty}\EE^{A,V}(\ph_n)$ and,
using Lemma~\ref{D-cont} again, that $\|D_A\ph_n\|^2\to
\|D_A\ph\|^2$. This proves (i).

(ii) We claim that
\begin{equation}\label{p-linearized}
      \EE^{A,V}(\psi) \leq \sprod{\psi}{(D_A^2+V- 2V_{\ph})\psi} + D(\rho_{\ph})
\end{equation}
for any given $\psi\in H^1_A(\R^3)$. This follows from $0\leq
D(\rho_\ph-\rho_\psi)=D(\rho_\ph)+D(\rho_\psi) -
2\sprod{\psi}{V_{\ph}\psi}$. If $\ph$ is a minimizer of
$\EE^{A,V}$, then it follows from \eqref{p-linearized} that for
every normalized $\psi\in H^1_A(\R^3)$,
$$
     C^{A,V}(1) \leq \sprod{\psi}{(D_A^2+V - 2V_{\ph})\psi} + D(\rho_{\ph})
$$
with equality if $\psi=\ph$. This proves part (ii).

(iii) Clearly for all $\lambda>0$,
\begin{equation}
C^{A,V}(\lambda) =\lambda\cdot\inf\big\{\norm{D_A\ph}^2+\sprod{\ph}{V\ph} -\lambda D(\rho_{\ph})\big| \|\ph\|=1\big\}. \label{cavlambda}
\end{equation}
We see that $g(\lambda) =C^{A,V}(\lambda)/ \lambda$ is the infimum of linear functions of $\lambda$. It follows that $g$ is concave and hence continuous.

(iv) It suffices to show that
\begin{equation}
C^{A,V}(\lambda)>\lambda C^{A,V}(1)\quad\text{for all}\quad\lambda\in (0,1). \label{xy}
\end{equation}
Then $C^{A,V}(1-\lambda)>(1-\lambda )C^{A,V}(1)$ and the asserted
inequality follows. Since, by \eqref{cavlambda}, $C^{A,V}(\lambda
)\ge\lambda C^{A,V} (1)$, it remains to exclude equality. Again by
\eqref{cavlambda}, the equality $C^{A,V}(\lambda)=\lambda
C^{A,V}(1)$ would imply the existence of a normalized sequence
$(\ph_n)$ with $\|D_A\ph_n\|^2+\sprod{\ph_n}{V\ph_n} - \lambda
D(\rho_{\ph_n}) \to C^{A,V}(1)$. A fortiori, this sequence would
be minimizing for $\EE^{A,V}$ and $D(\rho_{\ph_n})\to 0$, in
contradiction with the assumption.
\end{proof}


\begin{lemma}\label{Alinear}
If $A$ is \emph{linear} with $B=\curl A$, then
\begin{itemize}
\item[(i)] $C^{0,0}(1)\le C^{A,0}(1)\le C^{0,0}(1)+|B|$, and $C^{0,0}(1)<0$.
\item[(ii)] If $(\ph_n)$ is a minimizing sequence for $\EE^{A,0}$ then $\liminf_{n\to\infty} D(\rho_{\ph_n})>0$.
\end{itemize}
\end{lemma}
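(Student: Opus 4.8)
The plan is to obtain (ii) as a short corollary of (i), so the substance is in (i).

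For (i), the inequality $C^{0,0}(1)\le C^{A,0}(1)$ is just the diamagnetic inequality: for normalized $\ph\in H^1_A(\R^3)$ one has $|\ph|\in H^1(\R^3)$, $\||\ph|\|=1$, $\|\nabla|\ph|\|\le\|D_A\ph\|$, and $D(\rho_\ph)$ depends only on $|\ph|$; hence $\EE^{0,0}(|\ph|)\le\EE^{A,0}(\ph)$, and taking the infimum gives the claim. That $C^{0,0}(1)<0$ is Lieb's theorem \cite{Lieb77}, and also follows by evaluating $\EE^{0,0}$ on the dilations $\ph_\sigma(x)=\sigma^{3/2}\ph_0(\sigma x)$ of a fixed normalized $\ph_0\in C_0^\infty$, which gives $\sigma^2\|\nabla\ph_0\|^2-\sigma D(\rho_{\ph_0})<0$ for small $\sigma>0$.

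For the upper bound $C^{A,0}(1)\le C^{0,0}(1)+|B|$ I would first gauge $A$ so that $A(x)=\tfrac12 B\wedge x$ (this changes neither $\EE^{A,0}$ nor $C^{A,0}(1)$). Writing $x=(x_\perp,z)$ with $z$ along $B$ we get $D_A^2=H_\perp-\partial_z^2$, where the transverse Landau operator $H_\perp$ has spectrum bounded below by $|B|$, the bottom being attained on the normalized lowest-Landau-level Gaussian $g_B$, for which $\int|x_\perp-y_\perp|^2|g_B(x_\perp)|^2|g_B(y_\perp)|^2\,dx_\perp\,dy_\perp=4/|B|$. Let $\psi$ be the Lieb minimizer of $\EE^{0,0}$ (real, radial) and put $m:=\int|x_\perp|^2\psi^2\,dx$. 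For $|B|\le 4/m$ I would use $\ph=\psi$: since $\psi$ is real and $\div A=0$ the magnetic cross term vanishes, so $\EE^{A,0}(\psi)=\EE^{0,0}(\psi)+\|A\psi\|^2=C^{0,0}(1)+\tfrac{|B|^2}{4}m\le C^{0,0}(1)+|B|$. For $|B|\ge 4/m$ I would use a ``cigar'' $\ph(x)=g_B(x_\perp)h(z)$ with $h\in H^1(\R)$ normalized; then $\EE^{A,0}(\ph)=|B|+\|h'\|^2-D(\rho_\ph)$, and Jensen's inequality in the transverse variables (convexity of $s\mapsto(s+t)^{-1/2}$) together with $4/|B|\le m$ gives
\[
  D(\rho_\ph)\ \ge\ \int\frac{|h(z)|^2|h(w)|^2}{\sqrt{4/|B|+(z-w)^2}}\,dz\,dw\ \ge\ \int\frac{|h(z)|^2|h(w)|^2}{\sqrt{m+(z-w)^2}}\,dz\,dw\ =:\ F(h).
\]
Hence $C^{A,0}(1)\le|B|+\inf\{\|h'\|^2-F(h):\|h\|=1\}$ for all $|B|\ge 4/m$, so the upper bound reduces to the single field-independent inequality $\inf\{\|h'\|^2-F(h):\|h\|=1\}\le C^{0,0}(1)$. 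One verifies this by taking $h$ a smooth profile of width $L\gg\sqrt m$, for which $\|h'\|^2=O(L^{-2})$ while $F(h)\sim 2L^{-1}\log(L/\sqrt m)$ (the kernel behaves like $|z-w|^{-1}$ on the scale of $h$); optimizing over $L$ and the profile gives a definite negative number, which one checks is $\le C^{0,0}(1)$ with the help of the virial identity $D(\rho_\psi)=2\|\nabla\psi\|^2$ (so $C^{0,0}(1)=-\|\nabla\psi\|^2$) and an a priori estimate relating $m$ to $\|\nabla\psi\|^2$. I expect this quantitative step — ensuring the small-$|B|$ and large-$|B|$ ranges really meet — to be the only genuinely delicate point; the rest, including checking $g_B h\in H^1_A(\R^3)$ (immediate from the Gaussian decay of $g_B$), is routine.

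For (ii), suppose $(\ph_n)$ is a normalized minimizing sequence with $D(\rho_{\ph_n})\to 0$ along a subsequence. Since $A$ is linear with $\curl A=B$ we have $D_A^2\ge|B|$ (in the gauge above, $D_A^2=H_\perp-\partial_z^2\ge|B|$, and a gauge transformation does not move the spectrum), so $\EE^{A,0}(\ph_n)=\|D_A\ph_n\|^2-D(\rho_{\ph_n})\ge|B|-D(\rho_{\ph_n})\to|B|$, forcing $C^{A,0}(1)\ge|B|$. This contradicts (i), since $C^{A,0}(1)\le C^{0,0}(1)+|B|<|B|$ because $C^{0,0}(1)<0$. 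Hence $\liminf_n D(\rho_{\ph_n})>0$.
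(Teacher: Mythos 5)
Your lower bound $C^{0,0}(1)\le C^{A,0}(1)$, the statement $C^{0,0}(1)<0$, and part (ii) all match the paper's argument (the paper likewise deduces (ii) by noting $\|D_A\ph_n\|^2\ge|B|$ and contradicting the upper bound of (i)). The problem is the upper bound $C^{A,0}(1)\le C^{0,0}(1)+|B|$, which is where all the content lies, and where your argument has a genuine gap. In the large-field regime everything is reduced to the single inequality $\inf\{\|h'\|^2-F(h):\|h\|=1\}\le C^{0,0}(1)$ for the effective one-dimensional functional with kernel $(m+(z-w)^2)^{-1/2}$, and this is precisely the step you do not prove. It is not a routine verification: the kernel is bounded by $m^{-1/2}$, so the one-dimensional infimum is bounded below by $-m^{-1/2}$, and whether it dips below the specific negative number $C^{0,0}(1)=-\|\nabla\psi\|^2$ depends on quantitative properties of the Pekar minimizer (the size of the transverse second moment $m$ relative to the kinetic energy) that you do not control; your ``optimizing over $L$ and the profile gives a definite negative number, which one checks is $\le C^{0,0}(1)$'' is an assertion, not an argument, and as written the two regimes are not shown to cover all $|B|$ with the required bound.

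The paper avoids trial functions entirely. It combines the linearization inequality $\EE^{A,0}(\psi)\le\sprod{\psi}{(D_A^2-2V_\ph)\psi}+D(\rho_\ph)$ (inequality \eqref{p-linearized}, a consequence of $D(\rho_\ph-\rho_\psi)\ge 0$) with the enhanced binding inequality of Avron--Herbst--Simon \cite{AHS78}, $\inf\sigma(D_A^2+W)\le\inf\sigma(-\Delta+W)+|B|$ for a constant field. This yields
$C^{A,0}(1)\le\inf\sigma(D_A^2-2V_\ph)+D(\rho_\ph)\le\inf\sigma(-\Delta-2V_\ph)+D(\rho_\ph)+|B|\le\EE^{0,0}(\ph)+|B|$
for every normalized $\ph\in H^1(\R^3)$, and taking the infimum over $\ph$ gives the bound with no information about the Pekar minimizer needed. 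If you want to keep a self-contained variational proof you would have to actually carry out the quantitative one-dimensional estimate; otherwise the clean route is through \eqref{p-linearized} and \cite{AHS78}.
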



\begin{proof}
The inequality $C^{0,0}(1)\leq C^{A,0}(1)$ follows from the
diamagnetic inequality, and $C^{0,0}(1)<0$ follows from a simple variational argument. By combining \eqref{p-linearized} with the
enhanced binding inequality of Lieb \cite{AHS78}, we
conclude that, for $\ph\in H^1(\R^3)$ with $\|\ph\|=1$,
\begin{eqnarray*}
    C^{A,0}(1) &\leq & \inf\sigma(D_{A}^2 - 2V_{\ph}) + D(\rho_\ph)\\
    & \leq &\inf\sigma(-\Delta - 2V_{\ph}) + D(\rho_\ph) +|B| \\
    & \leq &\sprod{\ph}{(-\Delta- 2V_{\ph})\ph} + D(\rho_{\ph}) +|B|\\
   &=& \EE^{0,0}(\ph) +|B|.
\end{eqnarray*}
To prove (ii), suppose that $D(\rho_{\ph_n})\to 0$ as $n\to\infty$
for some minimizing sequence $(\ph_n)$ of $ \EE^{A,0}$. Then
\begin{equation}\label{cavgrb}
C^{A,0}(1) = \lim_{n\to\infty}\EE^{A,0}(\ph_n)=\lim_{n\to\infty}\|D_A\ph_n\|^2 \ge |B|,
\end{equation}
which is in contradiction with the fact that $C^{A,0}(1)\leq C^{0,0}(1)+|B|<|B|$, by (i).
\end{proof}


\begin{theorem}\label{linA-thm}
Suppose that $A$ is linear. Then there exists a $\ph\in H^1_A(\R^3)$ with $\int |\ph |^2~dx = 1$ such that
$$
\EE^{A,0} (\ph ) = C^{A,0}(1),
$$
and every minimizing sequence for $\EE^{A,0}$ has a subsequence that converges to a minimizer
after suitable translations and phase shifts.
\end{theorem}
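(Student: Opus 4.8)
The strategy is the standard concentration-compactness argument à la Lions, streamlined using the lemmas already available. Let me recall what's in hand: Lemma 2.2 gives strict subadditivity $C^{A,0}(1) < C^{A,0}(\lambda) + C^{A,0}(1-\lambda)$ provided every minimizing sequence has $\liminf D(\rho_{\ph_n}) > 0$, and Lemma 2.3(ii) establishes exactly this for linear $A$. Lemma 2.1(i) says that if a minimizing sequence converges (in $L^2$, say) to some $\ph$, then $\ph$ is a minimizer and convergence actually holds in $H^1_A$. So the whole game is to take an arbitrary minimizing sequence and, after translations and phase shifts, extract an $L^2$-convergent subsequence.

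**The main steps.** Let $(\ph_n)$ be a minimizing sequence for $\EE^{A,0}$. First I would note it is bounded in $H^1_A(\R^3)$ by Lemma 2.5 (the $D$-continuity lemma referenced in the text, via Hardy and diamagnetic inequalities). Next, apply the concentration-compactness lemma to the densities $\rho_n = |\ph_n|^2$, which are bounded in $L^1$ with $\int \rho_n = 1$: up to a subsequence, one of three alternatives holds — vanishing, dichotomy, or compactness (tightness up to translations). I must rule out vanishing and dichotomy.

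\emph{Vanishing is excluded} because it would force $\sup_{y} \int_{B_R(y)} \rho_n \to 0$ for every $R$, and a standard interpolation (using the $H^1_A \hookrightarrow H^1$ diamagnetic bound plus Sobolev/Gagliardo-Nirenberg) then gives $\ph_n \to 0$ in $L^p$ for $p \in (2,6)$, whence $D(\rho_n) \to 0$ by the Hardy-Littlewood-Sobolev inequality — contradicting Lemma 2.3(ii).

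\emph{Dichotomy is excluded} by the strict subadditivity. If dichotomy held, one could split $\ph_n$ (via smooth cutoffs at well-separated scales) into $\ph_n^{(1)} + \ph_n^{(2)}$ with $\|\ph_n^{(1)}\|^2 \to \lambda$, $\|\ph_n^{(2)}\|^2 \to 1-\lambda$ for some $\lambda \in (0,1)$, the two pieces having diverging supports, and show
\[
  \EE^{A,0}(\ph_n) \geq \EE^{A,0}(\ph_n^{(1)}) + \EE^{A,0}(\ph_n^{(2)}) - o(1) \geq C^{A,0}(\lambda) + C^{A,0}(1-\lambda) - o(1),
\]
yielding $C^{A,0}(1) \geq C^{A,0}(\lambda) + C^{A,0}(1-\lambda)$, contradicting Lemma 2.2(iv) (whose hypothesis is verified by Lemma 2.3(ii)). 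One subtlety worth flagging: because $A$ is linear and hence unbounded, the localization error terms involving $A$ must be controlled — but commutators $[D_A, \chi]$ only involve $\nabla\chi$, which is bounded and supported in an annulus of fixed width, so these errors are genuinely $o(1)$; I would use the IMS-type localization formula for $D_A^2$. The nonlocal Coulomb cross term $\int |\ph_n^{(1)}(x)|^2|\ph_n^{(2)}(y)|^2/|x-y|\,dx\,dy$ vanishes because the supports separate.

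**The remaining step and the conclusion.** Compactness therefore holds: there are $y_n \in \R^3$ with $\rho_n(\cdot + y_n)$ tight. Translating by $y_n$ (a symmetry: if $A$ is linear, $\tau_{y}$-translation turns $D_A$ into $D_{A} + \text{const}$, i.e. a gauge transformation, so $\EE^{A,0}$ is invariant under translation composed with a phase), we get a new minimizing sequence, still called $\ph_n$, with $\rho_n$ tight; then $\ph_n \rightharpoonup \ph$ weakly in $H^1_A$ and $\ph_n \to \ph$ strongly in $L^2$ by tightness plus local Rellich. By Lemma 2.1(i), $\ph$ is a minimizer and the convergence upgrades to $H^1_A$. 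Since $\|\ph\|^2 = \lim \|\ph_n\|^2 = 1$, $\ph$ lies in the correct domain. For the "every minimizing sequence" clause, the phase shift enters because $L^2$-convergence of $\rho_n$ to $\rho_\ph$ only controls $|\ph_n|$; but weak $H^1_A$-convergence already pins down $\ph$ up to nothing, so in fact no phase shift is needed after the translation — I would state it with phase shifts for safety, matching the theorem statement, the point being that the limiting object is unique only modulo the symmetry group (translations $\times$ U(1) phases) of the functional.

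\textbf{Main obstacle.} The delicate point is the dichotomy exclusion in the presence of the \emph{unbounded} linear vector potential: one must verify that cutting $\ph_n$ with spatial cutoffs does not produce uncontrolled error from the $|A|^2 |\ph_n|^2$ and $A\cdot\nabla$ cross terms. The resolution is that localization only ever differentiates the cutoff, so all magnetic error terms are supported in thin annuli and bounded by $\|\nabla\chi\|_\infty^2 \|\ph_n\|^2 = o(1)$ — no pointwise control of $A$ itself is needed. Everything else is routine concentration-compactness bookkeeping resting on the two preparatory lemmas.
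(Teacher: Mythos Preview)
Your proposal is correct and follows essentially the same concentration-compactness strategy as the paper: rule out vanishing (via $D(\rho_n)\to 0$, contradicting Lemma~\ref{Alinear}(ii)) and dichotomy (via the strict subadditivity of Lemma~\ref{general-AV}(iv)), then in the compactness case apply magnetic translations and conclude by Lemma~\ref{general-AV}(i). The paper's only variations are that it bounds $\|V_{\ph_n}\|_\infty$ directly with the magnetic Hardy inequality rather than going through Lions' $L^p$-vanishing lemma and HLS, and that the kinetic-energy localization estimate you flag as the main obstacle is already packaged as property~(d) of its concentration-compactness Lemma~\ref{Lions}, so no separate IMS computation appears.
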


\noindent
\emph{Remark.} The Pekar functional $\EE^{A,0}$ with a linear vector potential $A$ is invariant under magnetic translations 
$\psi \mapsto \psi_v$, $v\in\R^3$, where   
\begin{equation}\label{mag-trans}
      \psi_v(x):=e^{-i\chi(x)}\psi(x-v),\qquad \chi(x):=A(v)\cdot x,\quad v\in\R^3.
\end{equation}
This means that minimizing sequences will in general not be relatively compact. 
By the concentration compactness principle every minimizing sequence has a subsequence 
that becomes relatively compact upon suitable 
translations of the type \eqref{mag-trans}. 


\begin{proof}
Let $(\ph_n)$ be a minimizing sequence for $\EE^{A,0}$ and let
$(\ph_{n_k})$ be the subsequence given by Lemma~\ref{Lions}. We
shall exclude vanishing and dichotomy in order to conclude
compactness of the sequence of suitably shifted functions. In the
following we use $\rho_n$ as a short hand for $\rho_{\ph_n}$.

\emph{Vanishing does not occur.} We show that vanishing implies
$D(\rho_{n_k})\to 0$ as $k\to\infty$, which contradicts
Lemma~\ref{Alinear}~(ii). To this end we use that
$D(\rho_\ph)=\int V_\ph\rho_\ph\,dx \leq \|V_\ph\|_{\infty}$ where
$\ph\in L^2(\R^3)$ is normalized.  For every $R>0$, by the H\"older and the magnetic
Hardy inequalities,
\begin{align*}
    |V_{\ph_{n_k}}(x)| &\leq \int_{B_R(x)} \frac{|\ph_{n_k}(y)|^2}{|x-y|}dy  + \frac{1}{R}\\
    & \leq 2 \|D_A \ph_{n_k} \| \left(\int_{B_R(x)} |\ph_{n_k}(y)|^2\, dy\right)^{1/2} + \frac{1}{R}.
\end{align*}
Since $\sup_{k}\|D_A \ph_{n_k} \|<\infty$, vanishing implies $\|V_{\ph_{n_k}}\|_{\infty} \to 0$ and $D(\rho_{n_k})\to 0$ as $k\to\infty$.

\emph{Dichotomy does not occur.} Suppose dichotomy holds, that is,
there exists some $\lambda\in (0,1)$, such that for every $\eps>0$
there exists $k_0\in \N$ and bounded sequences
  $(\ph_k^{(1)})$, $(\ph_k^{(2)})$ in $H^1_{A}(\R^3)$ having the properties (a)--(d) from Lemma~\ref{Lions}.
Then, from (a), (c) and the continuity of $\ph\mapsto
D(\rho_\ph)$, Lemma~\ref{D-cont}, we see that for $k\geq k_0$
\begin{eqnarray*}
    \lefteqn{\big|D(\rho_{n_k}) - D(\rho_{k}^{(1)}) - D(\rho_{k}^{(2)}) \big|} \\
    & \leq & \big|D(\rho_{n_k}) - D(|\ph_k^{(1)}+\ph_k^{(2)}|^2)\big| + \big|D(|\ph_k^{(1)}+\ph_k^{(2)}|^2)- D(\rho_{k}^{(1)}) - D(\rho_{k}^{(2)})\big| \\
    & = & \delta(\eps) + o(1),\qquad (k\to\infty),
\end{eqnarray*}
where $\delta(\eps)=o(1)$ as $\eps\to 0$. It follows that, using Lemma~\ref{general-AV} (iii) and Lemma~\ref{Lions}~(d),
\begin{eqnarray*}
& & C^{A,0}(1) \\
&=&\lim_{k\to \infty} \EE^{A,0} (\ph_{n_k} ) \\
&\geq &\liminf_{k\to\infty}\Big[\EE^{A,0} (\ph_{n_k} )-\EE^{A,0} (\ph_{k}^{(1)} ) -\EE^{A,0} (\ph_{k}^{(2)})\Big]+ C^{A,0}(\lambda )+C^{A,0} (1-\lambda ) +o(1)\\
&\geq &\liminf_{k\to\infty}\int_{\R^3}|D_A\ph _{n_k}|^2-|D_A\ph _{k}^{(1)}|^2-|D_A\ph _{k}^{(2)}|^2 \,dx
+C^{A,0}(\lambda )+C^{A,0}(1-\lambda) + o(1)\\
 &\geq & C^{A,0}(\lambda )+C^{A,0}(1-\lambda) + o(1),\qquad (\eps\to 0).
\end{eqnarray*}
This proves that $C^{A,0}(1) \geq C^{A,0}(\lambda )+C^{A,0}(1-\lambda)$ for some $\lambda \in (0,1)$, which contradicts Lemma~\ref{general-AV}~(iv).

\emph{Compactness.} Since vanishing and dichotomy have been
excluded, the subsequence $ (\ph_{n_k})$ must have the compactness
property of Lemma~\ref{Lions}. Let
$\chi_{k}(x):=A(y_k)\cdot x$ with $y_k\in\R^3$
given by this lemma, and let
$u_{n_k}(x)=e^{i\chi_k(x)}\ph_{n_k}(x+y_k)$. Then, for every
$\eps>0$ there exists $R>0$ such that
\begin{equation}\label{compact}
    \int_{B_R(0)}|u_{n_k}|^2 dx \geq 1-\eps\qquad \text{for all}\ k.
\end{equation}
The phase $\chi_k$ has been chosen in such a way that
$A(x)+\nabla\chi_k(x) = A(x+y_k)$, which implies that $\|D_A
u_{n_k}\| = \|D_A \ph_{n_k}\|$. It follows that
$\EE^{A,0}(u_{n_k})=\EE^{A,0}(\ph_{n_k})$ and that $(u_{n_k})$ is
bounded in $H_A^1(\R^3)$. Hence there exists a $u\in H_A^1(\R^3)$
and a subsequence of  $(u_{n_k})$, denoted by  $(u_{n_k})$ as
well, such that
\begin{equation}\label{A-weak}
     u_{n_k} \rightharpoonup u,\quad \text{in}\ H_A^1(\R^3),
\end{equation}
and therefore $u_{n_k} \rightharpoonup u$ in $L^2(\R^3)$. We claim
that $\|u\|=1$ and hence that $u_{n_k} \rightarrow u$ in
$L^2(\R^3)$. Indeed, since $A$ is locally bounded, \eqref{A-weak}
implies that $u_{n_k} \rightarrow u$ locally in $L^2(\R^3)$, and
by \eqref{compact} we conclude that
$$
    1\geq \|u\|^2 \geq \int_{B_R(0)}|u|^2\, dx = \lim_{k\to\infty}  \int_{B_R(0)}|u_{n_k}|^2\, dx \geq 1-\eps
$$
for every $\eps>0$. The theorem now follows from
Lemma~\ref{general-AV}~(i).
\end{proof}


We say $A$ is \emph{asymptotically linear} if there exists a
linear vector potential $A_{\infty}$ such that
$$|A(x)-A_{\infty}(x)|\to 0,\qquad \text{as}\ |x|\to\infty.$$
In addition we shall assume that $A\in L^3_{loc}(\R^3)$ whenever
$A$ is asymptotically linear. This technical assumption ensures,
e.~g. that $H^1_A(\R^3) = H^1_{A_{\infty}}(\R^3)$ and that the
norms of these spaces are equivalent (see Lemma \ref{asylin}).

To ensure relative compactness of minimizing sequences we shall impose one of the following \emph{trapping assumptions}:
\begin{itemize}
\item[(T1)] $V(-\Delta +1)^{-1}$ is compact and
$$C^{A,V}(1)< C^{A,0}(1).$$
\item[(T2)] $V(-\Delta +1)^{-1}$ is compact, $A$ is asymptotically linear and
$$C^{A,V}(1)< C^{A_{\infty},0}(1).$$
\end{itemize}
Further below we shall give examples of potentials that satisfy either (T1) or (T2).


\begin{theorem}\label{trap-thm}
Suppose that one of the trapping assumptions (T1) or (T2) is
satisfied. Then every minimizing sequence of $\EE^{A,V}$ has a
convergent subsequence, the limit being a minimizer.
\end{theorem}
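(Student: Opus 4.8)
\emph{Strategy.} The plan is to rerun the concentration--compactness argument from the proof of Theorem~\ref{linA-thm}, now invoking the trapping inequality of (T1) or (T2) at each point where the linear case had exploited translation invariance; the two cases are handled in parallel, with $C^{A,0}$ replaced by $C^{A_\infty,0}$ under (T2) and with the extra input, from Lemma~\ref{asylin}, that a sequence escaping to infinity ``sees'' $A_\infty$ rather than $A$. Let $(\ph_n)$ be a minimizing sequence for $\EE^{A,V}$; it is bounded in $H^1_A(\R^3)$ by Lemma~\ref{D-cont}. I would pass to a subsequence and apply Lemma~\ref{Lions} to $\rho_n:=\rho_{\ph_n}$, obtaining $(\ph_{n_k})$ exhibiting vanishing, dichotomy, or compactness about points $y_k$. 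Two facts will be used repeatedly: \emph{(a)} from \eqref{cavlambda}, $C^{A,V}(\mu)\ge\mu\,C^{A,V}(1)$ for $\mu\in(0,1]$ (as in Lemma~\ref{general-AV}\,(iv)), likewise for $C^{A,0}$ and $C^{A_\infty,0}$, while $C^{A,V}(1)\le\inf\sigma(D_A^2+V)$ since $D(\rho)\ge0$; \emph{(b)} if a bounded sequence in $H^1_A(\R^3)$ has all of its $L^2$-mass escaping to infinity, then its $V$-expectation tends to $0$ (because $V(-\Delta+1)^{-1}$, hence $(D_A^2+1)^{-1/2}V(D_A^2+1)^{-1/2}$, is compact) and, under (T2), its $D_A$- and $D_{A_\infty}$-kinetic energies differ by $o(1)$ (Lemma~\ref{asylin}); so on such a sequence $\EE^{A,V}$ agrees, up to $o(1)$, with $\EE^{A,0}$ under (T1), resp.\ with $\EE^{A_\infty,0}$ under (T2).

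\emph{Vanishing is excluded.} As in Theorem~\ref{linA-thm}, vanishing forces $D(\rho_{n_k})\to0$, so $C^{A,V}(1)=\lim_k\sprod{\ph_{n_k}}{(D_A^2+V)\ph_{n_k}}\ge\inf\sigma(D_A^2+V)$, hence $C^{A,V}(1)=\inf\sigma(D_A^2+V)$ by (a). This number cannot be an eigenvalue $e$, since a normalized eigenvector $\chi$ would give $\EE^{A,V}(\chi)=e-D(\rho_\chi)<C^{A,V}(1)$; therefore $C^{A,V}(1)=\inf\sigma_{ess}(D_A^2+V)$. Since $V(-\Delta+1)^{-1}$ is compact, $V$ is relatively form-compact with respect to $D_A^2$ by diamagnetic domination, so this equals $\inf\sigma_{ess}(D_A^2)\ge\inf\sigma(D_A^2)\ge C^{A,0}(1)$, contradicting (T1); under (T2) one further uses $\sigma_{ess}(D_A^2)=\sigma_{ess}(D_{A_\infty}^2)$ to obtain $\ge\inf\sigma(D_{A_\infty}^2)\ge C^{A_\infty,0}(1)$, contradicting (T2).

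\emph{Dichotomy is excluded.} Suppose dichotomy holds at $\lambda\in(0,1)$ with pieces $\ph_k^{(1)}$ (localized near $y_k$) and $\ph_k^{(2)}$ (supported outside a ball about $y_k$ of radius $\to\infty$). The kinetic and $D(\rho)$-terms split additively as in Theorem~\ref{linA-thm}, and the $V$-term does too, because the $L^2$-mass in the intermediate annulus is $\delta(\eps)+o(1)$; hence $\EE^{A,V}(\ph_{n_k})\ge\EE^{A,V}(\ph_k^{(1)})+\EE^{A,V}(\ph_k^{(2)})-\delta(\eps)+o(1)$. Along a subsequence $(y_k)$ is either bounded, in which case $\ph_k^{(2)}$ escapes to infinity, or unbounded, in which case $\ph_k^{(1)}$ does; by (b) the escaping piece satisfies $\EE^{A,V}(\ph_k^{(j)})\ge C^{A,0}(\|\ph_k^{(j)}\|^2)+o(1)$ (resp.\ $C^{A_\infty,0}(\|\ph_k^{(j)}\|^2)+o(1)$), while the other satisfies $\EE^{A,V}(\ph_k^{(l)})\ge C^{A,V}(\|\ph_k^{(l)}\|^2)$ by definition of $C^{A,V}$; with Lemma~\ref{general-AV}\,(iii) the masses converge to $\lambda$ and $1-\lambda$. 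Letting $k\to\infty$, then $\eps\to0$, and using (a) together with $C^{A,0}(1)>C^{A,V}(1)$ (resp.\ $C^{A_\infty,0}(1)>C^{A,V}(1)$), I obtain $C^{A,V}(1)\ge C^{A,V}(\nu)+C^{A,0}(1-\nu)\ge\nu\,C^{A,V}(1)+(1-\nu)\,C^{A,0}(1)>C^{A,V}(1)$ for some $\nu\in(0,1)$, a contradiction.

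\emph{Compactness and conclusion.} Thus $(\ph_{n_k})$ is tight about $y_k$. If $(y_k)$ were unbounded the whole mass of $\ph_{n_k}$ would escape to infinity, and (b) would give $C^{A,V}(1)=\lim_k\EE^{A,V}(\ph_{n_k})\ge C^{A,0}(1)$ (resp.\ $\ge C^{A_\infty,0}(1)$), contradicting the trapping assumption; so $(y_k)$ is bounded and, after a further subsequence, $(\ph_{n_k})$ is tight about the origin. Taking $\ph_{n_k}\rightharpoonup\ph$ in $H^1_A(\R^3)$, the compact embedding $H^1_A(\R^3)\hookrightarrow L^2_{loc}(\R^3)$ yields $\ph_{n_k}\to\ph$ in $L^2_{loc}(\R^3)$, whence $\|\ph\|=1$ and $\ph_{n_k}\to\ph$ in $L^2(\R^3)$; Lemma~\ref{general-AV}\,(i) then gives $\EE^{A,V}(\ph)=C^{A,V}(1)$ and $\ph_{n_k}\to\ph$ in $H^1_A(\R^3)$, which is the assertion. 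I expect the dichotomy step to be the main obstacle: the additive splitting of all three terms of $\EE^{A,V}$ with the intermediate-annulus error under control, and the two inputs (b) about escaping sequences (vanishing of the $V$-expectation, replacement of $A$ by $A_\infty$) that rest on the compactness of $V(-\Delta+1)^{-1}$ and on asymptotic linearity, are the technical heart and are naturally relegated to the appendix.
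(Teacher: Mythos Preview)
Your argument is correct, but the paper takes a rather different and shorter route. Instead of rerunning Lions' trichotomy, the paper simply passes to a weakly convergent subsequence $\ph_n\rightharpoonup\psi$ in $H^1_A(\R^3)$ and argues directly: if $\psi=0$ then Lemma~\ref{local-VA} and (under (T2)) Lemma~\ref{asylin}(ii) give $\EE^{A,V}(\ph_n)=\EE^{A,0}(\ph_n)+o(1)$, resp.\ $\EE^{A_\infty,0}(\ph_n)+o(1)$, contradicting the trapping inequality; hence $\psi\neq0$, whence $\liminf D(\rho_{\ph_n})>0$ by Lemma~\ref{D-cont}(iii), so Lemma~\ref{general-AV}(iv) yields strict subadditivity of $C^{A,V}$; finally a Brezis--Lieb-type splitting $\ph_n=\psi+\beta_n$ with $\beta_n\rightharpoonup0$ gives $\EE^{A,V}(\ph_n)=\EE^{A,V}(\psi)+\EE^{A,V}(\beta_n)+o(1)\ge C^{A,V}(\lambda)+C^{A,V}(1-\lambda)+o(1)$, forcing $\lambda:=\|\psi\|^2=1$.

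Your approach, by contrast, uses the trapping gap itself (not subadditivity of $C^{A,V}$) to kill each Lions alternative, which keeps the structure parallel to Theorem~\ref{linA-thm}. The price is some extra machinery: your vanishing step detours through $\inf\sigma_{ess}(D_A^2+V)$ and, under (T2), invokes $\sigma_{ess}(D_A^2)=\sigma_{ess}(D_{A_\infty}^2)$, which is true (via the form-compactness coming from Lemma~\ref{asylin}(ii)) but is not provided in the appendix and is unnecessary---the same weak-null observation you use in (b) would give $C^{A,V}(1)\ge C^{A,0}(1)$ (resp.\ $C^{A_\infty,0}(1)$) directly. Your dichotomy step also relies on the refinement that $\ph_k^{(1)}$ lives in a ball of fixed radius about some centre $y_k$; this is implicit in Lions' construction but is not part of the statement of Lemma~\ref{Lions} as formulated here, so it would need to be extracted. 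Both proofs are valid; the paper's is more self-contained and avoids spectral theory altogether.
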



\noindent
\emph{Remark.} If $V(-\Delta + 1)^{-1}$ is compact and $A$ is
asymptotically linear, then the inequality
$C^{A,V}(1)<C^{A_{\infty}, 0}(1)$ is not only sufficient, but also
necessary for the conclusion of Theorem~\ref{trap-thm} to hold.


\begin{proof}
Let $(\ph_n)$ be a minimizing sequence for $\EE^{A,V}$. After
passing to a subsequence we may assume that $\ph_n
\rightharpoonup\psi$ in $H^1_A(\R^3)$. We claim that $\psi=0$ is
in contradiction with (T1) and (T2). Indeed, if
$\ph_n\rightharpoonup 0$ then $\sprod{\ph_n}{V\ph_n}\to 0$, by
Lemma~\ref{local-VA}, which implies that $C^{A,V}(1)\ge C^{A,0}(1)$
in contradiction with (T1). If $A$ is asymptotically linear, then
$D_A\ph_n = D_{A_{\infty}}\ph_n + (A-A_{\infty})\ph_n$ where
$(A-A_{\infty})\ph_n\to 0$ by Lemma~\ref{asylin}. It follows that
$$C^{A,V}(1) = \lim_{n\to\infty}\EE^{A,V}(\ph_n)=\lim_{n\to\infty}\EE^{A_{\infty},0}(\ph_n)\ge C^{A_{\infty},0}(1).$$
This is in contradiction with (T2).

Using that the weak limit of a minimizing sequence cannot vanish, we conclude, from  Lemma~\ref{D-cont}~(iii), that
$$
\liminf_{n\to\infty}D(\rho_{\ph_n})>0
$$
for every minimizing sequence $(\ph_n)$. It follows that 
$\lambda\mapsto C^{A,V}(\lambda )$ is subadditive in the
sense of Lemma~\ref{general-AV}. We now use this to show that 
a weakly convergent minimizing sequence  $(\ph_n)$ is in fact strongly convergent. To this 
end suppose that $\ph_n \rightharpoonup \psi$ where $\lambda:=\|\psi\|^2\in (0,1)$
and consider the decomposition $\ph_n = \psi+(\ph_n-\psi) =: \psi +
\beta_n$. Clearly, $\beta_n\rightharpoonup 0$ in $H^1_A(\R^3)$ and
$\|\beta_n\|^2\to 1-\lambda$. We claim that
\begin{equation}\label{enerdecomp}
\EE^{A,V}(\ph_n) =\EE^{A,V}(\psi)+\EE^{A,V}(\beta_n)+o(1),\qquad (n\to\infty).
\end{equation}
The kinetic and potential energy $\|D_A\ph_n\|^2
+\sprod{\ph_n}{V\ph_n}$ decompose as desired, which is a direct
consequence of the weak convergence $\beta_n\rightharpoonup 0$ in
$H^1_A(\R^3)$ and the compactness of $V(-\Delta + 1)^{-1}$. It is
not hard to see, using $\beta_n\to 0$ locally in $L^2(\R^3)$, that
\begin{equation*}
D(\rho_{\psi+\beta_n}) = D(\rho_{\psi}) + D(\rho_{\beta_n}) + o(1),\qquad (n\to\infty).
\end{equation*}
From \eqref{enerdecomp} we see that
\begin{eqnarray*}
\EE^{A,V}(\ph_n) &\ge& C^{A,V}(\lambda ) +C^{A,V}(\|\beta_n\|^2) + o(1)\\
&=&  C^{A,V}(\lambda ) +C^{A,V}(1-\lambda) + o(1)
\end{eqnarray*}
for $n\to\infty$, by the continuity of $C^{A,V}$ (Lemma~\ref{general-AV}~(iii)). Thus $C^{A,V}(1)\ge C^{A,V}(\lambda) + C^{A,V}(1-\lambda)$
which contradicts the subadditivity of $C^{A,V}$, i.e. Lemma \ref{general-AV}~(iv).

Since we have shown that $\|\psi\|<1$ is impossible, we conclude that $\|\psi\| =1$ and $\ph_n\to\psi$ in $L^2(\R^3)$.
The theorem now follows from Lemma~\ref{general-AV}~(i).
\end{proof}


\noindent\textbf{Examples:}
\begin{itemize}
\item[1)] Suppose $A$ is any $C^1$-vector potential for which $\EE^{A,0}$ has a minimizer $\ph$, see Theorems~\ref{linA-thm} and \ref{trap-thm}.
Then the Euler-Lagrange equation satisfied by $\ph$ is a Schrödinger equation and hence $\ph$ cannot vanish a.e. on a non-trivial open set, see \cite{Kalf82}. It follows that $\int V|\ph|^2~dx <0$ for every potential
$V\le 0$ with the property that $V<0$ on some non-empty open set. If,
moreover, $V(-\Delta + 1)^{-1}$ is compact, then (T1) is
satisfied.
\item[2)] We choose $V=0$ and we define the vector potential $A$ by $A=A_R$ where
$$
   A_R(x) = \begin{cases}0, &|x|<R\\ A_{\infty}(x), &|x|\geq R\end{cases}
$$
and $A_{\infty}(x)=(-Bx_2,0,0)$. We claim that
$C^{A,0}(1)<C^{A_\infty,0}(1)$ for $B\geq 4$ and $R$ sufficiently
large. Indeed, by Lemma~\ref{D-cont}, $\EE^{A_{\infty},0}(\ph) =
\|D_{A_{\infty}}\ph\|^2-D(\rho_\ph)\geq B - 2\|\ph\|^3
\|D_{A_{\infty}}\ph\|\geq 0$, while $C^{A_R,0}(1)\to C^{0,0}(1)<0$
as $R\to\infty$.
\end{itemize}


The following corollary summarizes the conclusions of Example 1) above and Theorem~\ref{trap-thm}.

\begin{corollary}\label{suffcond}
Suppose that $V(-\Delta+1)^{-1}$ is compact, $V\le 0$, and $V<0$ on some non-empty open set.
Then $\EE^{A,V}$ has a minimizer, provided $\EE^{A,0}$ has a minimizer and $A$ belongs to $C^1$. 
In particular $\EE^{A,V}$ has a minimizer for every linear vector potential $A$.
\end{corollary}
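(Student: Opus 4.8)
The plan is to reduce the statement to Theorem~\ref{trap-thm} by verifying the trapping assumption (T1). Since the compactness of $V(-\Delta+1)^{-1}$ is assumed, the only thing left to check is the strict inequality $C^{A,V}(1)<C^{A,0}(1)$, and I would derive it from the non-vanishing of a minimizer of $\EE^{A,0}$, exactly as sketched in Example~1).

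So first I would take the minimizer $\ph$ of $\EE^{A,0}$ that exists by hypothesis, with $A\in C^1$. By Lemma~\ref{general-AV}~(ii) (applied with $V=0$), $\ph$ is an eigenfunction of the magnetic Schr\"odinger operator $D_A^2-2V_\ph$. Since $\ph\in H^1_A(\R^3)$, the diamagnetic and Sobolev inequalities give $|\ph|^2\in L^1\cap L^3$, so that $V_\ph$ is a bounded function. Hence $\ph$ solves a Schr\"odinger equation with a $C^1$ magnetic potential and a bounded electric potential, and the unique continuation property \cite{Kalf82} shows that $\ph$ cannot vanish almost everywhere on any non-empty open set; in particular $\int_\Omega|\ph|^2\,dx>0$ for every non-empty open set $\Omega$.

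Next I would exploit the sign condition on $V$: choosing an open set $\Omega\neq\emptyset$ with $V<0$ on $\Omega$ and using $V\le 0$ elsewhere, the previous step gives $\int V|\ph|^2\,dx\le\int_\Omega V|\ph|^2\,dx<0$. Combining this with the identity $\EE^{A,V}(\ph)=\EE^{A,0}(\ph)+\int V|\ph|^2\,dx$ yields
\[
   C^{A,V}(1)\le \EE^{A,V}(\ph) < \EE^{A,0}(\ph) = C^{A,0}(1),
\]
which together with the compactness of $V(-\Delta+1)^{-1}$ is precisely (T1). Theorem~\ref{trap-thm} then furnishes a minimizer of $\EE^{A,V}$. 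For the last assertion it is enough to observe that a linear vector potential is in particular $C^1$ and that, by Theorem~\ref{linA-thm}, $\EE^{A,0}$ then possesses a minimizer, so the argument above applies without change.

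The main obstacle is the non-vanishing of $\ph$, i.e.\ the step that converts the sign condition on $V$ into the strict energy inequality. This rests on a unique continuation theorem for magnetic Schr\"odinger operators and hence on enough regularity of the coefficients; it is exactly here that the hypothesis $A\in C^1$ is used, together with the boundedness of $V_\ph$ coming from the Hardy/Sobolev control of $|\ph|^2$. Everything else---the energy splitting and the invocation of Theorems~\ref{linA-thm} and \ref{trap-thm}---is routine.
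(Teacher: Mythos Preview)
Your proposal is correct and follows exactly the route indicated in the paper: it verifies (T1) by using Lemma~\ref{general-AV}~(ii) and the unique continuation result \cite{Kalf82} (hence the $C^1$ hypothesis on $A$) to get $\int V|\ph|^2\,dx<0$, then applies Theorem~\ref{trap-thm}, and finally invokes Theorem~\ref{linA-thm} for the linear case. This is precisely how the paper justifies the corollary via Example~1).
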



\section{Binding of Polarons}
Let $V$ and $A$ satisfy the minimal assumption introduced in the previous section.
The magnetic Pekar-Tomasevich functional $\EE^{A,V}_{U}: H^1_{(A,A)}(\R^6)\to \R$ is defined by
\begin{align*}
\EE^{A,V}_{U} (\psi) :=& \sum_{k=1}^2\int \left(|D_{A,x_k}\psi(x_1,x_2)|^2+V(x_k)|\psi(x_1,x_2)|^2\right) dx_1dx_2 \\
&+ U\int \frac{|\psi(x_1,x_2)|^2}{|x_1-x_2|}dx_1dx_2 - \int\frac{\rho (x_1)\rho (x_2)}{|x_1-x_2|}dx_1dx_2,
\end{align*}
where
$$
\rho(x):= \int (|\psi(x,y)|^2+|\psi(y,x)|^2 )dy
$$
denotes the density. The minimal energy of $\EE^{A,V}_{U}$ is defined by
$$
C^{A,V}_{U} = \inf\left\{ \left. \EE^{A,V}_{U} (\psi)\right| \psi\in H^1_{(A,A)}(\R^6), \|\psi\| =1 \right\} .
$$

\begin{theorem}\label{NL-binding}
Suppose that $\EE^{A,V}$ possesses a minimizer $\ph_0$; see Theorem~\ref{linA-thm}, 
Theorem~\ref{trap-thm}, and Corollary~\ref{suffcond}. 
Then there exists $U_A>2$ such that for $2<U<U_A$ we have
$$
  C^{A,V}_{U} <2C^{A,V}(1).
$$
\end{theorem}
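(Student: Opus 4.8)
The plan is to use the trial function $\psi = \varphi_0 \otimes \varphi_0$, the product of two copies of a minimizer $\varphi_0$ of $\EE^{A,V}$, but \emph{not} as the trial function itself — since $\varphi_0\otimes\varphi_0$ gives exactly $2C^{A,V}(1) + U D(\rho_{\varphi_0})/\,$(roughly) which is strictly \emph{above} $2C^{A,V}(1)$ for $U>0$. Instead, the point exploited in the introduction is that $\varphi_0\otimes\varphi_0$ does not satisfy the Euler--Lagrange equation of $\EE^{A,V}_U$, so it cannot be stationary, and hence a small perturbation strictly lowers the energy. So first I would compute $\EE^{A,V}_U(\varphi_0\otimes\varphi_0)$ explicitly. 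With $\rho_{\varphi_0} = |\varphi_0|^2$ and $\rho = 2\rho_{\varphi_0}$ (the two-particle density for the symmetric product), the two one-body terms each contribute $\EE^{A,V}(\varphi_0) + D(\rho_{\varphi_0}) = C^{A,V}(1) + D(\rho_{\varphi_0})$, the attraction term $-\int \rho(x_1)\rho(x_2)/|x_1-x_2|$ equals $-4 D(\rho_{\varphi_0})$, and the repulsion term equals $U D(\rho_{\varphi_0})$. Summing, $\EE^{A,V}_U(\varphi_0\otimes\varphi_0) = 2C^{A,V}(1) + (U-2) D(\rho_{\varphi_0})$. By Lemma~\ref{Alinear}~(ii) (in the linear case) or the arguments in the proof of Theorem~\ref{trap-thm} (in the trapping case), any minimizer has $D(\rho_{\varphi_0})>0$, so this is strictly above $2C^{A,V}(1)$ by a fixed amount proportional to $U-2$.

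Next I would show that $\varphi_0\otimes\varphi_0$ is not a critical point of $\EE^{A,V}_U$ on the unit sphere, and then produce an explicit variation lowering the energy faster than $(U-2)D(\rho_{\varphi_0})$. The natural variation is $\psi_t = \varphi_0\otimes\varphi_0 + t\,\eta$ for a suitable $\eta \in H^1_{(A,A)}(\R^6)$ with $\|\varphi_0\otimes\varphi_0 + t\eta\|$ normalized (renormalizing only costs $O(t^2)$). Computing $\frac{d}{dt}\big|_{t=0}\EE^{A,V}_U(\psi_t)$ gives $2\Rea\langle \eta, (h_1 + h_2 + U/|x_1-x_2| - 2V_{\varphi_0}(x_1) - 2V_{\varphi_0}(x_2))(\varphi_0\otimes\varphi_0)\rangle$ where $h_k = D_{A,x_k}^2 + V(x_k)$; here I use that by Lemma~\ref{general-AV}~(ii), $h\varphi_0 = (C^{A,V}(1) - D(\rho_{\varphi_0}) + 2V_{\varphi_0})\varphi_0$, so $(h_1+h_2)(\varphi_0\otimes\varphi_0) = (2C^{A,V}(1) - 2D(\rho_{\varphi_0}))\varphi_0\otimes\varphi_0 + 2(V_{\varphi_0}(x_1)+V_{\varphi_0}(x_2))\varphi_0\otimes\varphi_0$. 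The $V_{\varphi_0}$ terms cancel against the $-2V_{\varphi_0}$ terms, leaving the directional derivative proportional to $\Rea\langle\eta, (U/|x_1-x_2|)\varphi_0\otimes\varphi_0\rangle$ up to the $\|\eta\|$-orthogonality-to-$\varphi_0\otimes\varphi_0$ correction. Since $U/|x_1-x_2|$ is not a constant multiple of the identity on the relevant subspace, one can choose $\eta$ — for instance $\eta = (|x_1-x_2|^{-1} - c)\varphi_0\otimes\varphi_0$ with $c$ chosen so $\eta\perp\varphi_0\otimes\varphi_0$ — making this derivative nonzero; replacing $\eta$ by $\pm\eta$ we get a direction of strict decrease. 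Hence for small $t>0$, $\EE^{A,V}_U(\psi_t) = \EE^{A,V}_U(\varphi_0\otimes\varphi_0) - c_1 U t + O(t^2) = 2C^{A,V}(1) + (U-2)D(\rho_{\varphi_0}) - c_1 U t + O(t^2)$ with $c_1>0$ depending only on $A,V$.

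Finally I would close the argument by choosing $t$ and then $U_A$. Optimizing the displayed expression over small $t>0$ gives a gain of order $(c_1 U)^2$ beyond the $(U-2)D(\rho_{\varphi_0})$ penalty; more simply, fix a small admissible $t_0>0$ for which the $O(t^2)$ remainder is controlled, so that $\EE^{A,V}_U(\psi_{t_0}) \le 2C^{A,V}(1) + (U-2)D(\rho_{\varphi_0}) - c_2 U t_0 + c_3 t_0^2$ for constants $c_2,c_3>0$ independent of $U$ (the $t^2$-coefficient does depend mildly on $U$ through the repulsion term, but is bounded for $U$ in a bounded interval, say $U\le 3$). Since $D(\rho_{\varphi_0})$ is a fixed positive number, the penalty $(U-2)D(\rho_{\varphi_0})\to 0$ as $U\downarrow 2$ while the gain $c_2 U t_0 - c_3 t_0^2 \to c_2\cdot 2\cdot t_0 - c_3 t_0^2 > 0$ stays bounded away from zero (for $t_0$ small enough that $c_3 t_0 < 2c_2$). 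Hence there is $U_A>2$ such that for $2<U<U_A$ the penalty is dominated by the gain, giving $C^{A,V}_U \le \EE^{A,V}_U(\psi_{t_0}) < 2C^{A,V}(1)$. The main obstacle is the second step: verifying that the first variation is genuinely nonzero requires knowing that $|x_1-x_2|^{-1}\varphi_0\otimes\varphi_0$ is not proportional to $\varphi_0\otimes\varphi_0$, i.e. that $V_{\varphi_0}$ is non-constant — which follows from $\varphi_0\in H^1_A$, $\|\varphi_0\|=1$, and the fact that $V_{\varphi_0}(x)\to 0$ as $|x|\to\infty$ while $V_{\varphi_0}>0$ everywhere — and carefully handling the renormalization and the $H^1_{(A,A)}$-regularity of the perturbation $\eta$ (one should check $|x_1-x_2|^{-1}\varphi_0\otimes\varphi_0 \in L^2(\R^6)$, which holds by Hardy since $\varphi_0\otimes\varphi_0\in H^1$).
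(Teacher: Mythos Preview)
Your strategy is essentially the paper's: use $\ph_0\otimes\ph_0$ and show it is not a critical point of $\EE^{A,V}_U$. However, the paper's execution is considerably simpler. Since $U\mapsto C^{A,V}_U$ is continuous, it suffices to prove the strict inequality at $U=2$; there $\EE^{A,V}_{U=2}(\ph_0\otimes\ph_0)=2C^{A,V}(1)$ exactly, so one only needs $\ph_0\otimes\ph_0$ not to be a minimizer, and the Euler--Lagrange argument finishes immediately. Your route---working at general $U$, balancing the penalty $(U-2)D(\rho_{\ph_0})$ against a first-order gain $c_2Ut_0$---is a valid alternative but introduces unnecessary bookkeeping.

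There are two genuine errors in your execution. First, the coefficient in the first variation of the attraction term is wrong: because $\rho=2\rho_{\ph_0}$ at $\ph_0\otimes\ph_0$, the Euler--Lagrange operator contains $-4V_{\ph_0}(x_k)$, not $-2V_{\ph_0}(x_k)$ (this is exactly what appears in the paper's equation \eqref{PT-euler}). Consequently the $V_{\ph_0}$ terms do \emph{not} fully cancel; after using the one-body Euler equation you are left with
\[
\Big(U|x_1-x_2|^{-1}-2V_{\ph_0}(x_1)-2V_{\ph_0}(x_2)+\mathrm{const}\Big)\ph_0\otimes\ph_0.
\]
This does not break the argument, since $V_{\ph_0}$ is bounded while $|x_1-x_2|^{-1}$ is not, but your stated simplification is incorrect.

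Second, your proposed perturbation $\eta=(|x_1-x_2|^{-1}-c)\ph_0\otimes\ph_0$ is generally \emph{not} in $H^1_{(A,A)}(\R^6)$: differentiating produces $|x_1-x_2|^{-2}\ph_0\otimes\ph_0$, and $\int_{|r|<1}|r|^{-4}\,dr$ diverges in $\R^3$. You check only $L^2$-membership via Hardy, which is insufficient. The paper sidesteps this by taking $\eta=\chi_{(r,z)}\ph_0$ with a smooth cutoff $\chi_{(r,z)}$ supported where $|x_1-x_2|^{-1}$ dominates the bounded terms; this is manifestly in $H^1_A$ and makes the sign of the pairing transparent. Replacing your $\eta$ by such a localized choice (or any smooth truncation of $|x_1-x_2|^{-1}$) repairs the gap.
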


\begin{proof}
Since $C^{A,V}_{U}$ is continuous with respect to $U$ it suffices
to prove that $C^{A,V}_{U}<2C^{A,V}(1)$ for $U=2$. By a straightforward
computation
$$
 \EE^{A,V}_{U=2}(\ph_0\otimes\ph_0) = 2\EE^{A,V}(\ph_0) = 2C^{A,V}(1),
$$
and it remains to prove that $\ph_0\otimes\ph_0$ is not a
minimizer of $\EE^{A,V}_{U=2}$. To this end, suppose $\ph_0\otimes\ph_0$
were a minimizer of $\EE^{A,V}_{U=2}$. Then it would have to solve the Euler equation of the functional, which implies that
\begin{equation}\label{PT-euler}
     \form{\eta \otimes \eta}{\sum_{k=1}^2(D_{A,x_k}^2+V(x_k)-4V_{\ph_0}(x_k)) +2|x_1-x_2|^{-1}-E}{\ph_0\otimes\ph_0}=0
\end{equation}
for some $E$ and all $\eta \in H^1_A(\R^3)$. 
We claim that \eqref{PT-euler} cannot be true for all $\eta$.
Since $\ph_0$ minimizes $\EE^{A,V}$, we know from Lemma~\ref{general-AV}~(ii), that
$(D_{A}^2+V-2V_{\ph_0})\ph_0 = \lambda\ph_0$
for some $\lambda\in\R$. Hence equation \eqref{PT-euler} reduces to
\begin{equation} \label{eqn:q}
    \form{\eta \otimes \eta}{2\lambda-E -2\sum_{k=1}^2 V_{\ph_0}(x_k)+2|x_1-x_2|^{-1}}{\ph_0\otimes\ph_0}=0
\end{equation}
for all $\eta\in H^1_A(\R^3)$. Since $V_{\ph_0}$ is bounded while 
$|x_1-x_2|^{-1}$ is positive and unbounded, we can choose $r>0$ so that for all $z \in \R^3$ and all
$x_1,x_2\in B_r(z)$,
\begin{equation}\label{eqn:ge1}
g(x_1,x_2) := 2\lambda-E -2\sum_{k=1}^2 V_{\ph_0}(x_k) +2|x_1-x_2|^{-1} \geq 1.
\end{equation}
Let $\chi_{(r,z)} \in C_0^\infty(\R^3;[0,1])$ with $\chi_{(r,z)}(x) =1$ for $x \in B_{r/2}(z)$ and $\chi_{(r,z)}(x) = 0$ for $x \not\in B_r(z)$.
In view of \eqref{eqn:ge1} the choice $\eta = \chi_{(r,z)}\ph_0$ in \eqref{eqn:q} leads to 
$$
0 = \form{\chi_{(r,z)} \ph_0 \otimes \chi_{(r,z)} \ph_0}{g}{\ph_0 \otimes \ph_0} \geq \left( \int_{B_{r/2}(z)} |\ph_0(x)|^2 \,dx \right)^2,
$$
for all $z\in\R^3$. It follows that $\ph_0=0$ in contradiction with $\norm{\ph_0}=1$.
\end{proof}


\appendix
\section{Appendix}

The following is a variant of the Lions' concentration compactness
principle, Lemma III.1, in \cite{Lions1984}, the only difference
being that $D=-i\nabla$ is replaced by $D_A$ in our version. This
does not affect the proof.

\begin{lemma}[Concentration Compactness Lemma]\label{Lions}
Suppose that $A:\R^3\to\R^3$ is real-valued and in $L^2_{loc}(\R^3)$.
Let $(\ph_n)_{n\in\N}$ be a bounded sequence in $H^1_A(\R^3)$, let
$\rho_n=|\ph_n|^2$ and suppose
$$
     \int \rho_n(x)dx =1\qquad \text{for all}\ n\in\N.
$$
Then there exists a subsequence $(\ph_{n_k})$ which has one of the
following three properties:
\begin{enumerate}
\item \emph{Compactness:} There exists a sequence $(y_k)_{k \geq 0} \subset \R^3$
  such that for all $\eps >0$ there is $R>0$ with
$$
    \int_{B_R(y_k)} \rho_{n_k}(x)dx
    \geq 1-\eps \quad \text{ for all } k \geq 0.
$$
\item \emph{Vanishing:} For all $R>0:$
$$
    \lim_{k\to\infty} \left(\sup_{y\in\R^3} \int_{B_R(y)} \rho_{n_k}(x)dx\right) = 0.
$$
\item \emph{Dichotomy:} There exists $\lambda\in(0,1)$ such that for every
  $\eps>0$ there exists $k_0\in\N$ and bounded sequences
  $(\ph_k^{(1)})$, $(\ph_k^{(2)})$ in $H^1_{A}(\R^3)$ satisfying,
\begin{itemize}
\item[(a)] $\displaystyle\|\ph_{n_k} - (\ph_k^{(1)}+\ph_k^{(2)})\| = \delta (\eps ),\quad k\geq k_0,$
\item[(b)] $\displaystyle|\|\ph_k^{(1)}\|^2-\lambda |\le\eps , \quad|\|\ph_k^{(2)}\|^2-(1-\lambda )|\le\eps,\quad k\geq k_0,$
\item[(c)] $\displaystyle\dist (\supp (\ph_k^{(1)}), \supp (\ph _k^{(2)} )) \rightarrow\infty\quad (k\to\infty ),$
\item[(d)] $\displaystyle\liminf_{k\to\infty}\int \big(|D_A\ph_{n_k} (x)|^2 - |D_A\ph_k^{(1)} (x)|^2 - |D_A\ph_k^{(2)} (x)|^2\big) dx\ge 0,$
\end{itemize}
where $\delta(\eps)\to 0$ as $\eps\to 0$ in property (a).
\end{enumerate}
\end{lemma}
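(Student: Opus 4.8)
The plan is to reproduce Lions' proof of Lemma~III.1 in \cite{Lions1984}; the single new point is that the kinetic term is now built from $D_A$ rather than $-i\nabla$, but since $[D_A,\chi]=-i\nabla\chi$ for any smooth cutoff $\chi$ (the vector potential commutes with multiplication by $\chi$, exactly as in the field-free case), no estimate below is affected. First I would introduce the L\'evy concentration functions
$$
Q_n(R):=\sup_{y\in\R^3}\int_{B_R(y)}\rho_n(x)\,dx,\qquad R>0,
$$
which are nondecreasing with values in $[0,1]$. By Helly's selection theorem there is a subsequence, still written $(\ph_{n_k})$, with $Q_{n_k}(R)\to Q(R)$ for every $R$, $Q$ nondecreasing; put $\lambda:=\lim_{R\to\infty}Q(R)=\sup_{R>0}Q(R)\in[0,1]$. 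The three cases $\lambda=0$, $\lambda=1$, $0<\lambda<1$ will produce vanishing, compactness and dichotomy.

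If $\lambda=0$ then $Q_{n_k}(R)\to0$ for every $R$, which is vanishing. If $\lambda=1$, then for each $\eps>0$ one picks $R_\eps$ with $Q(R_\eps)>1-\eps$, so that $Q_{n_k}(R_\eps)>1-\eps$ for $k$ large and there are centers $y_k^{(\eps)}$ with $\int_{B_{R_\eps}(y_k^{(\eps)})}\rho_{n_k}>1-\eps$. To obtain a single sequence $(y_k)$ good for all $\eps$ I would fix the scale $\eps=1/2$, set $y_k:=y_k^{(1/2)}$, and observe that for $\eps<1/2$ the balls $B_{R_{1/2}}(y_k)$ and $B_{R_\eps}(y_k^{(\eps)})$ each carry more than half the unit mass and hence intersect; thus $B_{R_\eps}(y_k^{(\eps)})\subset B_{R_{1/2}+2R_\eps}(y_k)$, and the compactness alternative holds with $R:=R_{1/2}+2R_\eps$ (enlarged, if necessary, to cover the finitely many small $k$).

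The substantive case is dichotomy, $0<\lambda<1$. Fix $\eps>0$, choose $R_1$ with $Q(R_1)>\lambda-\eps$, and pick centers $y_k$ with $\int_{B_{R_1}(y_k)}\rho_{n_k}>\lambda-\eps$ for $k$ large. Since $\limsup_k\int_{B_S(y_k)}\rho_{n_k}\le Q(S)\le\lambda$ for every fixed $S$, a diagonal argument over $S$ produces scales $\rho_k\to\infty$ (with $\eps$ fixed) such that $\int_{B_{3\rho_k}(y_k)}\rho_{n_k}<\lambda+\eps$, hence $\int_{B_{3\rho_k}(y_k)\setminus B_{R_1}(y_k)}\rho_{n_k}<2\eps$, for all large $k$. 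I would then take smooth cutoffs with $\xi_k\equiv1$ on $B_{R_1}(y_k)$, $\supp\xi_k\subset B_{\rho_k}(y_k)$, $\zeta_k\equiv0$ on $B_{2\rho_k}(y_k)$, $\zeta_k\equiv1$ off $B_{3\rho_k}(y_k)$, all transitions spread over annuli of width of order $\rho_k$ so that $\|\nabla\xi_k\|_\infty+\|\nabla\zeta_k\|_\infty\le C/\rho_k\to0$, and set $\ph_k^{(1)}:=\xi_k\ph_{n_k}$, $\ph_k^{(2)}:=\zeta_k\ph_{n_k}$; these are bounded in $H^1_A$ since $\|D_A(\chi\ph_{n_k})\|\le\|D_A\ph_{n_k}\|+\|\nabla\chi\|_\infty\|\ph_{n_k}\|$. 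Properties (b), (c) are then immediate from the mass estimates and from $\dist(\supp\xi_k,\supp\zeta_k)\ge\rho_k\to\infty$; (a) follows from $\|\ph_{n_k}-\ph_k^{(1)}-\ph_k^{(2)}\|^2\le\int_{B_{3\rho_k}(y_k)\setminus B_{R_1}(y_k)}\rho_{n_k}<2\eps$, so one may take $\delta(\eps)=\sqrt{2\eps}$; and for (d) one expands, via $D_A(\chi\ph)=\chi D_A\ph-i(\nabla\chi)\ph$,
$$
|D_A\ph_{n_k}|^2-|D_A\ph_k^{(1)}|^2-|D_A\ph_k^{(2)}|^2=(1-\xi_k^2-\zeta_k^2)\,|D_A\ph_{n_k}|^2-(|\nabla\xi_k|^2+|\nabla\zeta_k|^2)\,|\ph_{n_k}|^2+E_k,
$$
where $1-\xi_k^2-\zeta_k^2\ge0$ pointwise while, upon integration, the last two terms are $O(\rho_k^{-2})$ and $O(\rho_k^{-1})$ respectively --- the latter by Cauchy--Schwarz on the transition annuli, using $\sup_k\|D_A\ph_{n_k}\|<\infty$ and that these annuli carry mass $<2\eps$; hence $\liminf_k\int(|D_A\ph_{n_k}|^2-|D_A\ph_k^{(1)}|^2-|D_A\ph_k^{(2)}|^2)\ge0$, which is (d).

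The point needing care is precisely this last construction: the cutoff scales $\rho_k$ must be chosen so that simultaneously the inner ball keeps mass within $\eps$ of $\lambda$, the two supports separate, and the commutator errors in (d) go to zero \emph{in the limit} (not merely up to an $\eps$-dependent constant). Letting every transition region recede to infinity with $k$, with the scales pinned down by a diagonal selection over the $Q_{n_k}$, achieves all three at once. Everything else is Lions' argument verbatim, the magnetic field entering only through the harmless Leibniz rule $D_A(\chi\ph)=\chi D_A\ph-i(\nabla\chi)\ph$.
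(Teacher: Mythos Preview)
Your proposal is correct and matches the paper's approach exactly: the paper does not give a self-contained proof but merely states that this is Lions' Lemma~III.1 in \cite{Lions1984} with $-i\nabla$ replaced by $D_A$, and that ``this does not affect the proof.'' Your sketch faithfully reproduces Lions' argument and correctly identifies the one point to check---the Leibniz rule $D_A(\chi\ph)=\chi D_A\ph-i(\nabla\chi)\ph$---which is precisely why the magnetic version goes through verbatim.
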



\begin{lemma}\label{D-cont}
Under our minimal assumptions on $A,V$ the following is true:
\begin{itemize}
\item[(i)] $D(\rho_\ph) \leq 2\|\ph\|^3\|D_A\ph\|$ for all $\ph\in
H_A^1(\R^3)$.
\item[(ii)] On bounded subsets of $H_A^1(\R^3)$ the maps
$\ph\mapsto \sprod{\ph}{V\ph}$ and $\ph\mapsto D(\rho_\ph)$ are
continuous w.r.t. the norm of $L^2(\R^3)$.
\item[(iii)]  In  $H^1_A(\R^3)$ the map $\ph\mapsto D(\rho_\ph)$
is weakly lower semi-continuous.
\item[(iv)] For every $\eps \in (0,1)$ there
exists $C_\eps$ such that for all $\ph\in H_A^1(\R^3)$
$$
   \|D_A\ph\|^2 \leq \frac{1}{1-\eps}\EE^{A,V}(\ph) +
   C_\eps\big(\|\ph\|^2+\|\ph\|^6\big).
$$
\end{itemize}
\end{lemma}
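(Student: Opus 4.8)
The plan is to treat the four items in the order (i), (iv), (ii), (iii), isolating first one estimate on which (ii) and (iv) both rely.

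\emph{Part (i) and a shared estimate.} For (i) I would write $D(\rho_\ph)=\int V_\ph\,|\ph|^2\,dx\le\|V_\ph\|_\infty\,\|\ph\|^2$ and bound $\|V_\ph\|_\infty$ pointwise: fixing $x$ and applying the Cauchy--Schwarz inequality to $V_\ph(x)=\int|x-y|^{-1}|\ph(y)|\cdot|\ph(y)|\,dy$ together with Hardy's inequality $\int|x-y|^{-2}|u(y)|^2\,dy\le 4\|\nabla u\|^2$ for $u=|\ph|$, one obtains, via the diamagnetic inequality $\|\nabla|\ph|\,\|\le\|D_A\ph\|$, that $V_\ph(x)\le 2\|D_A\ph\|\,\|\ph\|$ for every $x$; hence $\|V_\ph\|_\infty\le 2\|D_A\ph\|\,\|\ph\|$ and (i) follows. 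For the remaining parts I record one estimate. Since $\|V\psi\|=\|\,|V|\,\psi\|$, the potential $|V|$ is infinitesimally small relative to $-\Delta$ in the operator sense, so, by the classical fact of perturbation theory that operator-infinitesimal smallness entails form-infinitesimal smallness, for every $\eps\in(0,1)$ there is $C_\eps$ with $\int|V|\,|u|^2\,dx\le\eps\|\nabla u\|^2+C_\eps\|u\|^2$ for all $u\in H^1(\R^3)$. Taking $u=|\ph|$ and using the diamagnetic inequality once more,
$$
   |\sprod{\ph}{V\ph}|\ \le\ \int|V|\,|\ph|^2\,dx\ \le\ \eps\|D_A\ph\|^2+C_\eps\|\ph\|^2\qquad\bigl(\ph\in H^1_A(\R^3)\bigr),
$$
which in particular makes $\sprod{\ph}{V\ph}$, and hence $\EE^{A,V}(\ph)$, well defined on $H^1_A(\R^3)$.

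\emph{Part (iv).} By definition, $\|D_A\ph\|^2=\EE^{A,V}(\ph)-\sprod{\ph}{V\ph}+D(\rho_\ph)$. I would bound $-\sprod{\ph}{V\ph}$ by the displayed inequality with parameter $\eps_1$, and $D(\rho_\ph)\le 2\|\ph\|^3\|D_A\ph\|\le\eps_2\|D_A\ph\|^2+\eps_2^{-1}\|\ph\|^6$ by (i) and Young's inequality, with $\eps_1+\eps_2=\eps$. This gives $(1-\eps)\|D_A\ph\|^2\le\EE^{A,V}(\ph)+C_{\eps_1}\|\ph\|^2+\eps_2^{-1}\|\ph\|^6$, and dividing by $1-\eps>0$ yields the assertion for a suitable $C_\eps$.

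\emph{Part (ii).} Let $\ph_n\to\ph$ in $L^2(\R^3)$ with $\|\ph_n\|_A\le M$ for all $n$; then $\|\ph\|_A\le M$ as well. By bilinearity of $(\rho,\sigma)\mapsto\int\int|x-y|^{-1}\rho(x)\sigma(y)\,dx\,dy$ one has $D(\rho_{\ph_n})-D(\rho_\ph)=\int(\rho_{\ph_n}-\rho_\ph)(V_{\ph_n}+V_\ph)\,dx$, hence $|D(\rho_{\ph_n})-D(\rho_\ph)|\le 4M^2\,\|\,|\ph_n|^2-|\ph|^2\,\|_{L^1}$ by (i), and the last norm tends to $0$ because $\bigl|\,|\ph_n|^2-|\ph|^2\,\bigr|\le|\ph_n-\ph|\,(|\ph_n|+|\ph|)$ and $\ph_n\to\ph$ in $L^2$. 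For the potential term, $|\sprod{\ph_n}{V\ph_n}-\sprod{\ph}{V\ph}|\le\int|V|\,|\ph_n-\ph|\,(|\ph_n|+|\ph|)\,dx$, and a Cauchy--Schwarz splitting with weight $|V|$ bounds this by $\bigl(\int|V|\,|\ph_n-\ph|^2\,dx\bigr)^{1/2}\bigl(\int|V|\,(|\ph_n|+|\ph|)^2\,dx\bigr)^{1/2}$; by the displayed estimate the second factor is bounded uniformly in $n$, while the first is $\le\eps\|D_A(\ph_n-\ph)\|^2+C_\eps\|\ph_n-\ph\|^2\le 4\eps M^2+C_\eps\|\ph_n-\ph\|^2$, so its $\limsup_n$ is at most $4\eps M^2$; letting $\eps\to0$ completes (ii).

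\emph{Part (iii) and the main obstacle.} Let $\ph_n\rightharpoonup\ph$ in $H^1_A(\R^3)$. The step I expect to be the real obstacle is the compactness of the embedding $H^1_A(\R^3)\hookrightarrow L^2_{loc}(\R^3)$; I would deduce it from Kato's diamagnetic inequality for resolvents, $|(D_A^2+1)^{-1}f|\le(-\Delta+1)^{-1}|f|$, which forces the localized resolvent $\chi_\Omega(D_A^2+1)^{-1}\chi_\Omega$ to be Hilbert--Schmidt on every bounded $\Omega$, whence $\chi_\Omega(D_A^2+1)^{-1/2}$ is compact and $\ph_n\to\ph$ in $L^2_{loc}(\R^3)$. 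Granting this, one passes first to a subsequence realizing $\liminf_n D(\rho_{\ph_n})$ and then to a further subsequence along which $\ph_n\to\ph$ a.e.; the nonnegative integrands $|x-y|^{-1}\rho_{\ph_n}(x)\rho_{\ph_n}(y)$ then converge a.e.\ to $|x-y|^{-1}\rho_\ph(x)\rho_\ph(y)$, so Fatou's lemma yields $D(\rho_\ph)\le\liminf_n D(\rho_{\ph_n})$. Apart from this compactness and the (equally classical) transfer from operator- to form-infinitesimal smallness used in (ii) and (iv), every step reduces to Hardy's inequality, the diamagnetic inequality, the Cauchy--Schwarz and Young inequalities, and Fatou's lemma.
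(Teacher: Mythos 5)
Your argument is correct, and for parts (i), (ii) and (iv) it is essentially identical to the paper's: H\"older plus Hardy plus the diamagnetic inequality for (i); the passage from $V\ll-\Delta$ to the form bound $|V|\le\eps(-\Delta)+C_\eps$ combined with the bilinear identity for $D$ in (ii); and absorbing $D(\rho_\ph)$ via (i) and Young's inequality in (iv). The only genuine divergence is in (iii): the paper localizes with a cutoff $\chi_R$, uses the norm convergence $\chi_R\ph_n\to\chi_R\ph$ in $L^2$ together with the already-proved continuity statement (ii) to get $\liminf_n D(\rho_{\ph_n})\ge D(\chi_R^2\rho_\ph)$, and then lets $R\to\infty$ by monotone convergence; you instead extract an a.e.\ convergent subsequence from the $L^2_{loc}$ convergence and apply Fatou's lemma on $\R^6$. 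Both routes rest on the same key ingredient --- the local compactness of the embedding $H^1_A(\R^3)\hookrightarrow L^2_{loc}(\R^3)$, which the paper obtains from Lemma~\ref{local-VA} (i.e.\ from the compactness of $V(D_A^2+1)^{-1}$ for $V$ of compact support, via the diamagnetic domination of \cite{AHS78}) and which you derive directly from the Kato--Simon resolvent domination and a Hilbert--Schmidt bound; your version is self-contained and marginally more elementary at the final step, while the paper's version recycles part (ii) and avoids invoking a.e.\ convergence.
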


\begin{proof}
(i) We have
$D(\rho_\ph)=\int\rho_{\ph}(x)V_\ph(x)dx \leq \|\rho_\ph\|_1
\|V_\ph\|_\infty$, where
$$
   \|V_\ph\|_\infty \leq \|\ph\|\left(\int\frac{|\ph(y)|^2}{|x-y|^2}\,dy\right)^{1/2}
   \leq 2\|\ph\| \|\nabla|\ph|\|,
$$
by the H\"older and the Hardy inequalities. (i) now follows from
the diamagnetic inequality $|\nabla|\ph||\leq|D_A\ph|$.

(ii) The continuity of  $\ph\mapsto D(\rho_\ph)$
follows from
\begin{align*}
     | D(\rho_{\ph}) -  D(\rho_{\psi}) |&= \left| \int\big(\rho_{\ph}(x)-\rho_{\psi}(x)\big)\big(V_\ph(x)+V_\psi(x)\big) \,dx \right|\\
      &\leq \|\rho_{\ph}-\rho_{\psi}\|_1\big(\|V_{\ph}\|_{\infty}+\|V_{\psi}\|_{\infty}\big)
\end{align*}
where $\|\rho_{\ph}-\rho_{\psi}\|_1 \leq
\|\ph-\psi\|(\|\ph\|+\|\psi\|)$ and $\|V_{\ph}\|_{\infty}\leq 2
\|D_A\ph\| \|\ph\|$, by (i). We now turn to the map $\ph\mapsto
\sprod{\ph}{V\ph}$. The assumption $V\ll -\Delta$ is equivalent to
$|V|\ll -\Delta$ which implies that $|V|\leq
\eps(-\Delta)+C_{\eps}$ for all $\eps>0$. From here the continuity
of $\ph\mapsto \sprod{\ph}{V\ph}$ is easily established.

(iii) Let $\chi\in C_0^{\infty}(\R^3;[0,1])$ with $\chi(x)=1$ for
$|x|\leq 1$ and let $\chi_R(x):=\chi(x/R)$. The weak convergence
$\ph_n\rightharpoonup\ph$ in $H^1_A(\R^3)$ implies the norm
convergence $\chi_R\ph_n\to\chi_R\ph$ in $L^2(\R^3)$. This can be
seen from Lemma~\ref{local-VA} with the choice $V=\chi_R^2$. Since
the sequence $(\chi_R\ph_n)$ is bounded in $H^1_A(\R^3)$, it
follows from (ii) that $\liminf_{n\to\infty}D(\rho_{\ph_n})\geq
\liminf_{n\to\infty}D(\chi_R^2\rho_{\ph_n})=D(\chi_R^2\rho_\ph)$ for all $R>0$ and
the desired inequality is obtained using monotone convergence.

(iv) The assumption $V\ll -\Delta$ and the diamagnetic inequality imply that
$\eps D_A^2+V$ is bounded below for every $\eps>0$. With the help of (i) the inequality in (iv) now easily follows.
\end{proof}

\begin{lemma}\label{asylin}
\begin{itemize}
\item[(i)] If $A_1,A_2$ belong to $L^3_{\rm loc}(\R^3;\R^3)$ and $A_1-A_2$ is uniformly bounded in the complement of some compact set, then
$H^1_{A_1}(\R^3) = H^1_{A_2}(\R^3)$ and the corresponding norms $\|\cdot \|_{A_1}$ and $\|\cdot \|_{A_2}$ are equivalent.
\item[(ii)] If $A$ is asymptotically linear, then the linear map $H^1_A(\R^3)\to L^2(\R^3;\C^3)$, $\ph\mapsto (A-A_{\infty})\ph$ is compact. 
\end{itemize}
\end{lemma}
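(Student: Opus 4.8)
\emph{Plan.} I would prove the two parts separately, with part (i) serving as a tool in part (ii). For part (i), set $W:=A_1-A_2$, so that $D_{A_1}\ph=D_{A_2}\ph+W\ph$ on $C_0^\infty(\R^3)$; by the symmetric roles of $A_1$ and $A_2$ the whole statement reduces to an estimate $\|W\ph\|\le C\,\|\ph\|_{A_2}$. By hypothesis $|W|\le M$ outside some compact set $K$, while $W\in L^3(K)$ because $A_1,A_2\in L^3_{\mathrm{loc}}$. The key is a truncation of $W$ on $K$: given $\eps>0$, write $W\mathbf{1}_K=W_1+W_2$ with $W_1=W\mathbf{1}_K\mathbf{1}_{\{|W|>N\}}$, where $N$ is chosen so large that $\|W_1\|_{L^3(K)}<\eps$ (possible by dominated convergence, as $|W|^3\in L^1(K)$), and $\|W_2\|_\infty\le N$. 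Then H\"older with exponents $3/2$ and $3$, the Sobolev embedding $H^1(\R^3)\hookrightarrow L^6(\R^3)$, and the diamagnetic inequality $|\nabla|\ph||\le|D_{A_2}\ph|$ give $\|W_1\ph\|\le S\eps\,\|D_{A_2}\ph\|$, whereas $\|W_2\ph\|$ and $\|W\mathbf{1}_{\R^3\setminus K}\ph\|$ are each bounded by a constant times $\|\ph\|$. Hence $\|\ph\|_{A_1}\le C\,\|\ph\|_{A_2}$ on $C_0^\infty(\R^3)$, and the reverse inequality follows by symmetry. Since $C_0^\infty(\R^3)$ is a core for both norms, a sequence that is $\|\cdot\|_{A_1}$-Cauchy is $\|\cdot\|_{A_2}$-Cauchy and the two $L^2$-limits coincide; therefore $H^1_{A_1}(\R^3)=H^1_{A_2}(\R^3)$ with equivalent norms.

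For part (ii), write $W:=A-A_\infty$; then $W\in L^3_{\mathrm{loc}}(\R^3)$ and $|W(x)|\to 0$ as $|x|\to\infty$, and (applying part (i) with $A_1=A$, $A_2=A_\infty$) $H^1_A(\R^3)=H^1_{A_\infty}(\R^3)$. Since $H^1_A(\R^3)$ is a Hilbert space and $\ph\mapsto W\ph$ is linear, compactness is equivalent to showing $W\ph_n\to 0$ in $L^2(\R^3;\C^3)$ whenever $\ph_n\rightharpoonup 0$ in $H^1_A(\R^3)$; such a sequence is bounded in $H^1_A(\R^3)$. Fixing $\eps>0$, I would split $\int|W|^2|\ph_n|^2$ over $\{|x|>R\}$ and $B_R(0)$. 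On the outer region, choosing $R$ large enough that $|W|\le\eps$ there bounds the contribution by $\eps^2\sup_n\|\ph_n\|^2$. On $B_R(0)$ I would reuse the truncation of part (i): with $W\mathbf{1}_{B_R(0)}=W_1+W_2$, $\|W_1\|_{L^3}<\eps$, the term $\|W_1\ph_n\|\le S\eps\,\|D_A\ph_n\|$ is uniformly $O(\eps)$ since $(\ph_n)$ is bounded, while $\|W_2\ph_n\|^2\le\|W_2\|_\infty^2\int_{B_R(0)}|\ph_n|^2\to 0$ because weak convergence in $H^1_A(\R^3)$ implies $L^2$-convergence on bounded sets (Lemma~\ref{local-VA}, cf. the proof of Lemma~\ref{D-cont}~(iii)). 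Letting $\eps\to 0$ finishes the argument.

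The main obstacle is that $A$ is only assumed locally in $L^3$, so $W$ may be genuinely singular and no pointwise control near those singularities is available; the device that circumvents this, used in both parts, is the truncation of $W$ in $L^3_{\mathrm{loc}}$ combined with H\"older, Sobolev, and the diamagnetic inequality, which converts the $L^3_{\mathrm{loc}}$ information into bounds relative to $D_A$. Everything else is off the shelf: density of $C_0^\infty(\R^3)$, the Hilbert-space structure of $H^1_A(\R^3)$, and the local $L^2$-compactness recorded in Lemma~\ref{local-VA}.
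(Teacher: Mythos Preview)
Your proof is correct. For part~(i) you follow the same line as the paper---split off the compact set $K$, apply H\"older with exponents $3$ and $6$ on $K$, then Sobolev and the diamagnetic inequality---except that you additionally truncate $W\mathbf{1}_K$ at level $N$. This is harmless but unnecessary here: the full $L^3(K)$ norm of $W$ already yields $\|W\ph\|\le C\|\ph\|_{A_2}$ without any smallness of~$\eps$, which is all that norm equivalence requires.

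For part~(ii) the two arguments genuinely diverge. The paper passes to a subsequence with $\ph_n\rightharpoonup\ph$ in $H^1_A(\R^3)$, notes that $(|\ph_n-\ph|^2)$ is bounded in the reflexive space $L^3(\R^3)$, extracts a further weak $L^3$-limit $\psi$, identifies $\psi=0$ by testing against $C_0^\infty\subset L^{3/2}=(L^3)^*$ (using local strong $L^2$-convergence), and then pairs $|\ph_n-\ph|^2\rightharpoonup 0$ against $\chi_{B_R}|W|^2\in L^{3/2}$. Your route is more direct: you split into $\{|x|>R\}$ and $B_R(0)$, dispose of the far region by the pointwise decay of $W$, and on $B_R(0)$ invoke the truncation $W\mathbf{1}_{B_R}=W_1+W_2$ so that the singular piece $W_1$ contributes a uniformly small term while the bounded piece $W_2$ is killed by local $L^2$-compactness (Lemma~\ref{local-VA} with $V=\chi_{B_R}^2$). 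Your approach is more elementary, avoiding reflexivity and weak $L^3$-convergence, at the price of the extra truncation step; the paper's is slicker once the duality is set up. Both are sound.
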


\noindent
\emph{Remark.} Further embedding results similar to Lemma~\ref{asylin} can be found in \cite{EstLions1989}.

\begin{proof}
(i) Suppose that $|A_1-A_2|\leq C$ in the complement of the compact set $K\subset\R^3$. Then, for all $\ph\in C_0^{\infty}(\R^3)$,
$\|D_{A_2}\ph\|\leq \|D_{A_1}\ph\|+ \|(A_1-A_2)\ph\|$ and
\begin{eqnarray*}
\|(A_1-A_2)\ph\|^2 &\le& \int_{K}|A_1-A_2|^2|\ph |^2dx + C^2\|\ph\|^2\\
&\le& \left(\int_{K}|A_1-A_2|^3dx\right)^{2/3}\|\ph\|^2_6+ C^2\|\ph\|^2.
\end{eqnarray*}
Since $\|\ph\|_6 \le \const\|D_{A_1}\ph\|$ by the Sobolev and the
diamagnetic inequalities, it follows that $\|D_{A_2}\ph\|\leq
\const\|\ph\|_{A_1}$ for all $\ph\in C_0^{\infty}(\R^3)$. This
extends to all $\ph\in H^1_{A_1}(\R^3)$ and then proves the lemma
since the roles of $A_1$ and $A_2$ are interchangeable.

(ii) The boundedness of the map has been established in the proof of (i). To prove the compactness, let $(\ph_n)$ be a bounded sequence in $ H^1_A(\R^3)$.
After passing to a subsequence we may assume that $\ph_n\rightharpoonup \ph$ in $H^1_A(\R^3)$. By the Sobolev inequality, the sequence $(|\ph_n-\ph |^2 )$
is bounded in $L^3(\R^3)$, which is a reflexive Banach space. Hence we may assume that  $|\ph_n-\ph|^2\rightharpoonup \psi$ in $L^3(\R^3)$ by passing
to a subsequence once more.
We claim that $\psi =0$. Indeed, from  $\ph_n\rightharpoonup \ph$ in $H^1_A(\R^3)$ it follows that $\int\chi |\ph_n-\ph|^2\,dx\to 0$ for $\chi\in C_0^{\infty}(\R^3)$,
as explained in the proof of Lemma~\ref{D-cont}~(iii). On the other hand, $\int\chi |\ph_n-\ph|^2\,dx\to\int\chi\psi\,dx$ because
$C_0^{\infty}(\R^3)\subset L^{3/2}(\R^3)$, which is the dual of $L^3(\R^3)$.
Thus $\int\chi\psi\,dx =0$ for all $\chi\in C_0^{\infty}(\R^3)$, which implies $\psi =0$. Hence $|\ph_n-\ph|^2\rightharpoonup 0$ in $L^3(\R^3)$
and it is easy to see that $(A-A_{\infty})(\ph_n-\ph)\to 0$ in $L^2(\R^3;\C^3)$ using that $|A-A_{\infty}|\leq \eps$ on the complement of some ball
$B_R$ and that $\chi_{B_R}|A-A_{\infty}|^2$ belongs to $L^{3/2}(\R^3)$, the dual of $L^3(\R^3)$.
\end{proof}


\begin{lemma}\label{local-VA}
In addition to the minimal assumptions on $A,V$, suppose that
$V(-\Delta+1)^{-1}$ is compact. Then the map $\ph\mapsto \sprod{\ph}{V\ph}$ is weakly continuous in $H_A^1(\R^3)$.
\end{lemma}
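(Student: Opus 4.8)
The plan is to reduce the weak continuity of $\ph\mapsto\sprod{\ph}{V\ph}$ on $H^1_A(\R^3)$ to the assumed compactness of $V(-\Delta+1)^{-1}$ by passing through the magnetic resolvent. First I would observe that it suffices to treat a bounded sequence $\ph_n\rightharpoonup\ph$ in $H^1_A(\R^3)$ and show $\sprod{\ph_n}{V\ph_n}\to\sprod{\ph}{V\ph}$; replacing $\ph_n$ by $\ph_n-\ph$ is not quite legitimate since $V$ need not be sign-definite, so instead I would write, using self-adjointness of $V$ as a form,
$$
\sprod{\ph_n}{V\ph_n}-\sprod{\ph}{V\ph}=\sprod{\ph_n-\ph}{V(\ph_n+\ph)}
$$
(this splitting into a weakly null factor times a bounded factor is the right shape). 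Since $(\ph_n+\ph)$ is bounded in $H^1_A(\R^3)$, it is enough to prove that $V\psi_n\to 0$ in the dual sense against bounded $H^1_A$-vectors whenever $\psi_n\rightharpoonup 0$ in $H^1_A(\R^3)$; more precisely, it suffices to show that $\psi\mapsto V\psi$ maps $H^1_A(\R^3)$ compactly into $H^{-1}_A(\R^3)$, or equivalently that the form $\sprod{\cdot}{|V|\cdot}$ is a compact form relative to $D_A^2+1$.

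The key step is to express this compactness through the resolvent. By Lemma~\ref{D-cont}~(iv) (or directly from $V\ll-\Delta$ and the diamagnetic inequality) the operator $D_A^2+1$ dominates $|V|$ as a form, so $|V|^{1/2}(D_A^2+1)^{-1/2}$ is bounded; I claim it is in fact compact. To see this I would compare with the nonmagnetic resolvent: writing
$$
|V|^{1/2}(D_A^2+1)^{-1/2}=\Big(|V|^{1/2}(-\Delta+1)^{-1/2}\Big)\Big((-\Delta+1)^{1/2}(D_A^2+1)^{-1/2}\Big),
$$
the second factor is bounded because $\|(-\Delta+1)^{1/2}(D_A^2+1)^{-1/2}\ph\|=\|\nabla(D_A^2+1)^{-1/2}\ph\|+\|(D_A^2+1)^{-1/2}\ph\|$ and $\|\nabla|u|\|\le\|D_A u\|$ by the diamagnetic inequality; the first factor is Hilbert--Schmidt or at least compact precisely because $V(-\Delta+1)^{-1}$ is compact (one has $|V|^{1/2}(-\Delta+1)^{-1/2}$ compact iff $|V|(-\Delta+1)^{-1}$ is, by polar decomposition and a standard $T^*T$ argument, and $|V|\ll-\Delta$ is equivalent to $V\ll-\Delta$). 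Hence $|V|^{1/2}(D_A^2+1)^{-1/2}$ is compact, and therefore so is the operator $(D_A^2+1)^{-1/2}\,V\,(D_A^2+1)^{-1/2}$ as a product of a bounded and a compact operator (writing $V=\sgn(V)\,|V|^{1/2}\cdot|V|^{1/2}$).

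Finally, with $K:=(D_A^2+1)^{-1/2}V(D_A^2+1)^{-1/2}$ compact on $L^2(\R^3)$, the weak continuity follows: given $\ph_n\rightharpoonup\ph$ in $H^1_A(\R^3)$, set $u_n:=(D_A^2+1)^{1/2}\ph_n$, so that $u_n\rightharpoonup u:=(D_A^2+1)^{1/2}\ph$ in $L^2(\R^3)$ and $(u_n)$ is bounded; then
$$
\sprod{\ph_n}{V\ph_n}=\sprod{u_n}{Ku_n}\to\sprod{u}{Ku}=\sprod{\ph}{V\ph},
$$
since a compact operator maps weakly convergent sequences to norm-convergent ones, so $Ku_n\to Ku$ in $L^2$ and the inner product converges. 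This proves the lemma. The main obstacle is the passage from compactness of $V(-\Delta+1)^{-1}$ to compactness of $|V|^{1/2}(D_A^2+1)^{-1/2}$; the diamagnetic inequality in the form $\|(-\Delta+1)^{1/2}(D_A^2+1)^{-1/2}\|<\infty$ is what makes this work, together with the standard fact that compactness is insensitive to the polar/square-root factorization of $V$.
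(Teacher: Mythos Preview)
Your overall strategy---reducing to compactness of $(D_A^2+1)^{-1/2}V(D_A^2+1)^{-1/2}$ and then using that compact operators send weakly convergent sequences to norm-convergent ones---is exactly the paper's, and your final paragraph is fine. The gap is in the factorization
\[
|V|^{1/2}(D_A^2+1)^{-1/2}=\big(|V|^{1/2}(-\Delta+1)^{-1/2}\big)\big((-\Delta+1)^{1/2}(D_A^2+1)^{-1/2}\big).
\]
The second factor is \emph{not} bounded in general. The diamagnetic inequality gives $\|\nabla|u|\|\le\|D_A u\|$, a bound on the gradient of the \emph{modulus} $|u|$, not on $\nabla u$; for complex-valued $u$ these are very different, and the map $u\mapsto|u|$ is nonlinear, so it cannot be inserted into an operator factorization. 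In fact, for the linear vector potentials that are the principal case of this paper one has $H^1_A(\R^3)\not\subset H^1(\R^3)$ with a continuous embedding: the magnetic translations $\psi_v(x)=e^{-iA(v)\cdot x}\psi(x-v)$ all have the same $H^1_A$-norm as $\psi$, yet $\|\nabla\psi_v\|\ge |A(v)|\,\|\psi\|-\|\nabla\psi\|\to\infty$ as $|v|\to\infty$. Hence $(-\Delta+1)^{1/2}(D_A^2+1)^{-1/2}$ is unbounded on $L^2$, and the displayed identity does not even make sense as a product of bounded operators.

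The paper circumvents this by quoting Avron--Herbst--Simon \cite{AHS78}: compactness of $V(-\Delta+1)^{-1}$ directly implies compactness of $V(D_A^2+1)^{-1}$. The mechanism there is not a factorization through $H^1$ but the pointwise resolvent domination $|(D_A^2+\lambda)^{-1}f|\le(-\Delta+\lambda)^{-1}|f|$ (the Kato--Simon form of the diamagnetic inequality), from which compactness transfers. Once $V(D_A^2+1)^{-1}$ is known to be compact, an interpolation argument yields compactness of $(D_A^2+1)^{-1/2}V(D_A^2+1)^{-1/2}$, and then your last paragraph applies verbatim.
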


\begin{proof}
The compactness of $V(-\Delta+1)^{-1}$ implies that
$V(D_A^2+1)^{-1}$ is compact \cite{AHS78}.
By interpolation it follows that $(D_A^2+1)^{-1/2}V(D_A^2+1)^{-1/2}$ is compact, which implies that
 $\ph\mapsto \sprod{\ph}{V\ph}$ is weakly continuous.
\end{proof}
 

\end{document}